\newcolumntype{C}{>{$}c<{$}}
\newcommand{\curly}{\mathrel{\leadsto}}
\newcommand{\R}{{\mathbb  R}}   
\newcounter{mastercounter}
\numberwithin{mastercounter}{section}
\newtheorem{thm}[mastercounter]{Theorem}
\newtheorem{lemma}[mastercounter]{Lemma}
\newtheorem{prop}[mastercounter]{Proposition}
\let\c@equation\c@mastercounter
\numberwithin{equation}{section}
\DeclareMathOperator{\tr}{tr}
\DeclareMathOperator{\hs}{Hess}
\DeclareMathOperator{\bu}{{\bf u}}
\begin{document}

\title{ Upper block triangular form for the Laplace-Beltrami operator on the special orthogonal group acquired through a flag of trace polynomials spaces
}

\author{Petre Birtea, Ioan Ca\c su, Dan Com\u{a}nescu\\
{\small Department of Mathematics, West University of Timi\c soara} 
\\
{\small Bd. V. P\^ arvan, No 4, 300223 Timi\c soara, Rom\^ania}\\
{\small Email: petre.birtea@e-uvt.ro, ioan.casu@e-uvt.ro, dan.comanescu@e-uvt.ro}}
\date{}

\maketitle

\begin{abstract}
The Laplacian of a general trace polynomial function defined on the special orthogonal group $SO(N)$ is explicitly computed. An invariant flag of spaces generated by trace polynomials is constructed. The matrix of the Laplace-Beltrami operator on $SO(N)$ for this flag of vector spaces takes an upper block triangular form. As a consequence of this construction, the eigenvalues and eigenfunctions of the Laplace-Beltrami operator on $SO(N)$ can be computed in an iterative manner. For the particular cases of the special orthogonal groups $SO(3)$ and $SO(4)$ the complete list of eigenvalues is obtained and the corresponding irreducible characters of the representation for these groups are expressed as trace polynomials.

\end{abstract}
{\bf Keywords:} Laplace-Beltrami operator; special orthogonal group; constraint manifold; character of representation; eigenvalue; eigenfunction; trace polynomial \\
{\bf MSC Subject Classification 2020:} 15B10, 53Bxx, 58Cxx, 33C55, 42Cxx, 43Axx, 20Cxx.

\maketitle

\section{Introduction}

At the core of the Fourier analysis stands the spectral theory of Laplace-Beltrami operator on the circle $S^1$ regarded as a Lie group. Since then a lot of effort has been dedicated to generalize the concepts and results of Fourier analysis to more complicated Lie groups. One of the most powerful tools  is the link between the representation theory of Lie groups and the spectral theory of Laplace-Beltrami operator on such Lie groups.

In this paper we take a more computational (formula oriented) approach to the study of Laplace-Beltrami operator on special orthogonal group $SO(N)$. More precisely, we construct a flag of invariant polynomial subspaces on which this operator takes upper block triangular form. This particularity of the construction leads to an iterative  way to compute the eigenvalues. Another outcome of the upper block triangular form is the possibility to write the irreducible characters of special orthogonal groups as trace polynomials. The detailed structure of the paper is as follows.

In Section \ref{spherical-eigenvalues-121}, using the explicit formula for the Laplacian $\Delta_{SO(N)}$ on the special orthogonal group introduced in \cite{laplacian},  we give a computational proof of the classical result which states that the eigenvalues of the Laplace-Beltrami operator on the sphere $S^{N-1}$, divided by 2, i.e. $-\frac{k(k+N-2)}{2}$, $k\in \mathbb{N}$, are among the eigenvalues of the Laplace-Beltrami operator $\Delta_{SO(N)}$.  Also, we prove that the Gegenbauer polynomials $C_k^{\left(\frac{N-2}{2}\right)}(u_{ij})$, where $u_{ij}$ is a component of an orthogonal matrix, are eigenfunctions of $\Delta_{SO(N)}$ corresponding to the eigenvalues  $-\frac{k(k+N-2)}{2}$.
\medskip

In Section \ref{flag}, we introduce a flag of real vector spaces formed with products of power sum polynomials defined on $SO(N)$, flag that we prove is invariant under the Laplace-Beltrami operator $\Delta_{SO(N)}$. More precisely, the flag is  
$$V_0\subseteq V_{\leq 1}\subseteq \dots \subseteq V_{\leq k}\subseteq \dots,$$
with
$$V_{\leq k}:= \text{Span}_{\R}\{p_{_{\boldsymbol{\lambda}}}:SO(N)\to \R\,|\,\boldsymbol{\lambda}\vdash j\,\text{and}\,j\leq k\},$$
be the vector space generated by the products of power sum polynomials 
$$p_{_{\boldsymbol{\lambda}}}(U)=\tr(U^{m_1})\dots \tr(U^{m_s}),$$
where $\boldsymbol{\lambda}\vdash j$ is an integer partition of $j\in \mathbb{N}$.  In the literature, these polynomials are called trace polynomials and we will  refer to them as such. 

The main result in this section states that 
 $$\Delta_{SO(N)}(V_{\leq k})\subseteq V_{\leq k}.$$
 Choosing a basis for this flag, the matrix of $\Delta_{SO(N)}$ restricted to this flag takes an upper block triangular form,
\begin{equation*}
 \left[\Delta_{SO(N)}\right]_{\mathcal{V}}:=\left[\begin{array}{ccccc}
\Delta_{00} & \Delta_{01} & \dots & \Delta_{0k} & \dots \\
 \mathbb{O} & \Delta_{11} & \dots & \Delta_{1k} & \dots \\
\vdots & \vdots & \ddots & \vdots & \ddots \\
 \mathbb{O} &  \mathbb{O} & \dots & \Delta_{kk} & \dots \\
\vdots & \vdots & \ddots & \vdots & \ddots 
\end{array}\right],
\end{equation*}
where $\mathcal{V}:=\bigcup\limits_{k=0}^{\infty} V_{\leq k}$.

In Section \ref{cazul-SO(3)}, we apply the results obtained in the previous section to the case of the special orthogonal group $SO(3)$. More precisely, we compute two bases for the vector spaces $V_{\leq k}$, $k\in \mathbb{N}$, and write the corresponding matrices $\left[\Delta_{SO(3)}\right]_{V_{\leq k}}$.  This allows us to directly compute the eigenvalues of $\Delta_{SO(3)}$. Also, as a byproduct of having an explicit form for the matrix $\left[\Delta_{SO(3)}\right]_{V_{\leq k}}$, for any natural $k$, we give a formula for the irreducible characters of $SO(3)$ as trace polynomials. The irreducible characters of $SO(3)$ can be written using the above bases as follows:
\begin{align}\label{caractere-SO(3)-453}
\chi_k(U) & =\sum_{j=1}^k \tr(U^j)-k+1 \nonumber \\
 & =\sum_{j=0}^k \left(\sum_{l=j}^k(-1)^{k-l}C_{k+l}^{2l}C_{l}^j\right)(\tr(U))^j.
\end{align}

In the last section, we study the case of the special orthogonal group $SO(4)$. Since it is the only non-simple group in the family $SO(N)$, the study of the spectrum of $\Delta_{SO(4)}$ is a bit more delicate. For the case of simple, compact, connected Lie groups there exist elegant formulas that give all the eigenvalues of their respective Laplace-Beltrami operators, see \cite{svirkin}, \cite{svirkin-zubareva}, \cite{svirkin-2010},\cite{zubareva-2016}, \cite{zubareva-2020}, and \cite{cardona}.
We determine explicitly the spectrum of $\Delta_{SO(4)}$ by using the Riemannian submersion technique. More precisely, we prove
$$\text{spec}\Delta_{SO(4)}=\left\{-\frac{1}{4}\left(k_1(k_1+2)+k_2(k_2+2)\right)\,|\,k_1,k_2\in \mathbb{N}\,\,\hbox{and}\,\,k_1,k_2\,\,\text{with}\,\,\emph{\bf same parity}\right\}.$$
We use the notation from \cite{svirkin}, denoting by $\hbox{spec}\Delta_M$ the set of eigenvalues, without their multiplicities, of the Laplace-Beltrami operator defined on a Riemannian manifold $M$. 

Next, we construct a basis for $V_{\leq k}$, for all natural $k$. As a consequence, we give an explicit form of the matrix $\left[\Delta_{SO(4)}\right]_{V_{\leq 4}}$. We give a formula for the irreducible characters of $SO(4)$ up to order four, similar to \eqref{caractere-SO(3)-453}, as trace polynomials. Our approach can be applied to larger orders, with the computations being more involved.

\medskip
During the paper, we make certain notations, conventions and abuse of notations that we explain in the following.
For a matrix $U\in  \mathcal{M}_{N\times N}(\R)$, we denote by ${\bf u}_1,...,{\bf u}_N\in \R^N$ the  vectors formed with the columns of the matrix $U$ and consequently, $U$ has the form $U=\left[{\bf u}_1,...,{\bf u}_N\right]$. If $U\in SO(N)=\{U\in \mathcal{M}_{N\times N}(\R) \,|\,U^tU=UU^t=\mathbb{I}_N,\,\det U=1\}$, then the vectors ${\bf u}_1,...,{\bf u}_N\in \R^N$ are orthonormal.
We identify $\mathcal{M}_{N\times N}(\R)$ with $\R^{N^2}$ by the isomorphism $\text{vec}:\mathcal{M}_{N\times N}(\R)\rightarrow \R^{N^2}$ defined by the column vectorization $\text{vec}(U)\stackrel{\text{not}}{=}{\bf u}:=({\bf u}_1^t,...,{\bf u}_N^t)^t$.

In all that follows, in order to not overload the notations, we make a few {\bf conventions}:
\begin{itemize}
\item[(C$_1$)] We will sometimes denote with the same symbol a function defined on $\mathcal{M}_{N\times N}(\mathbb{R})$ with its restriction to the special orthogonal group $SO(N)$, and vice-versa, a function defined on the special orthogonal group $SO(N)$ with its (natural) prolongation defined on $\mathcal{M}_{N\times N}(\mathbb{R})$.
\item[(C$_2$)] When no confusions may arise, we will sometimes omit the subscript $Euc$ when writing $\nabla_{Euc}f, \text{Hess}_{Euc}f,\Delta_{Euc}f$, which denote the gradient, the Hessian, and respectively the Laplacian of a smooth function $f$ defined on $\mathcal{M}_{N\times N}(\mathbb{R})$, with respect to the Euclidean (Frobenius) metric ${\bf g}_{_{\text{Frob}}}$. 
When we refer to the gradient, Hessian, or Laplacian of a function defined on $(SO(N), {\bf g}_{_{\text{Frob}}})$, we will {\it always} use only the subscript $SO(N)$. 
\item[(C$_3$)] When we use the notations $\nabla_{Euc} f$ and $\text{Hess}_{Euc}f$ (with respect to the Euclidean metric on $\mathcal{M}_{N\times N}(\mathbb{R})$), we refer to them as being in {\it matrix} form, i.e., for a smooth function $f:\mathcal{M}_{N\times N}(\R)\to \R$, we denote $\widehat{f}:\R^{N^2}\to \R$, $\widehat{f}:=f\circ\text{vec}^{-1}$ and
$$\nabla_{Euc} f(U):=\text{vec}^{-1}\left(\nabla\widehat{f}(\text{vec}(U))\right),$$
$$\text{Hess}_{Euc}f(U):=[\text{Hess}\widehat{f}](\text{vec}(U))=\left[\begin{array}{ccc}
\left[\frac{\partial^2 \widehat{f}(\text{vec}(U))}{\partial {\bf u}_1\partial {\bf u}_1}\right] &\dots&\left[\frac{\partial^2 \widehat{f}(\text{vec}(U))}{\partial {\bf u}_1\partial {\bf u}_n}\right]\\
\vdots&\ddots&\vdots\\
\rule{0pt}{12pt}\left[\frac{\partial^2 \widehat{f}(\text{vec}(U))}{\partial {\bf u}_n\partial {\bf u}_1}\right]&\dots&\left[\frac{\partial^2 \widehat{f}(\text{vec}(U))}{\partial {\bf u}_n\partial {\bf u}_n}\right]\end{array}\right],$$
where
$$\left[\frac{\partial^2 \widehat{f}}{\partial {\bf u}_i\partial {\bf u}_j}\right]:=\left[\begin{array}{ccc}
\frac{\partial^2 \widehat{f}}{\partial {u}_{1i}\partial {u}_{1j}} &\dots&\frac{\partial^2 \widehat{f}}{\partial {u}_{1i}\partial {u}_{nj}}\\
\vdots&\ddots&\vdots\\
\rule{0pt}{12pt}\frac{\partial^2\widehat{f}}{\partial {u}_{ni}\partial {u}_{1j}}&\dots&\frac{\partial^2 \widehat{f}}{\partial {u}_{ni}\partial {u}_{nj}}\end{array}\right].$$
Also, $\Delta_{Euc} f(U):=\Delta\widehat{f}(\text{vec}(U))$.\\
This convention also applies when we write $\nabla_{SO(N)} f, \text{Hess}_{\,SO(N)}f, \Delta_{SO(N)}f$.
\end{itemize}

Using results from \cite{first-order} and \cite{second-order}, in \cite{laplacian} the following two formulas have been proved.

\begin{thm}\label{laplacian-sfera-10}\emph{(\cite{laplacian})}
For $\widetilde{f}:S_R^{N-1}\to \R$ a smooth function defined on the sphere of radius $R$ on $\R^N$, $f:\R^N\to \R$ a smooth prolongation of $\widetilde{f}$ and ${\bf x}\in S_R^{N-1}\subset \R^N$, we have the formula
\begin{equation}\label{Laplace-sfera-2022}
\left(\Delta_{S_R^{N-1}}\widetilde{f}\right)({\bf x})=\left(\Delta f\right)({\bf x})-\frac{1}{R^2}\tr\left({{\bf x}\,{\bf x}^t}\left[\emph{Hess}f\right]({\bf x})\right)-\frac{N-1}{R^2}\left<{\bf x}, \nabla f({\bf x})\right>.
\end{equation}
\end{thm}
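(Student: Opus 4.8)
The plan is to regard $S_R^{N-1}$ as a hypersurface of $\R^N$ and to relate its intrinsic Laplace--Beltrami operator to the ambient Euclidean gradient, Hessian and Laplacian through the Gauss formula, the only geometric input being the unit normal and the second fundamental form of the round sphere. Throughout I would fix the outward unit normal $\nu({\bf x})={\bf x}/R$ and express the Laplace--Beltrami operator as the trace of the intrinsic Hessian over an orthonormal frame $\{e_1,\dots,e_{N-1}\}$ of $T_{\bf x}S_R^{N-1}$, that is $\big(\Delta_{S_R^{N-1}}\widetilde f\big)({\bf x})=\sum_{i=1}^{N-1}\text{Hess}_{S_R^{N-1}}\widetilde f(e_i,e_i)$.

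First I would compare the two Hessians using the Gauss formula $\bar\nabla_X Y=\nabla_X Y+h(X,Y)\nu$, where $\bar\nabla$ is the flat ambient connection and $\nabla$ the induced one. Since $Xf=X\widetilde f$ for tangent $X$, a short manipulation of $\text{Hess} f(X,Y)=XYf-(\bar\nabla_X Y)f$ and of its intrinsic analogue gives, for tangent $X,Y$, the identity $\text{Hess}_{S_R^{N-1}}\widetilde f(X,Y)=\text{Hess} f(X,Y)+h(X,Y)\langle\nabla f,\nu\rangle$.

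Next I would evaluate $h$ directly for the sphere: taking the inner product of the Gauss formula with $\nu$ gives $h(X,Y)=\langle\bar\nabla_X Y,\nu\rangle$, and differentiating $\langle Y,\nu\rangle=0$ along a tangent direction $X$ together with $\bar\nabla_X\nu=X/R$ yields $h(X,Y)=-\frac1R\langle X,Y\rangle$, so that the mean curvature is $\sum_{i=1}^{N-1}h(e_i,e_i)=-\frac{N-1}{R}$. Tracing the Hessian identity over $\{e_i\}$ and completing to an ambient orthonormal basis of $\R^N$ by adjoining $\nu$ (which turns $\sum_i\text{Hess} f(e_i,e_i)$ into $\Delta f-\text{Hess} f(\nu,\nu)$) gives $\big(\Delta_{S_R^{N-1}}\widetilde f\big)({\bf x})=\Delta f({\bf x})-\text{Hess} f(\nu,\nu)-\frac{N-1}{R}\langle\nabla f,\nu\rangle$. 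I would close the argument by substituting $\nu={\bf x}/R$, rewriting $\text{Hess} f(\nu,\nu)=\frac{1}{R^2}{\bf x}^t[\text{Hess} f]{\bf x}=\frac{1}{R^2}\tr\big({\bf x}{\bf x}^t[\text{Hess} f]\big)$ and $\langle\nabla f,\nu\rangle=\frac1R\langle{\bf x},\nabla f\rangle$, which reproduces \eqref{Laplace-sfera-2022} exactly.

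The step I expect to be the main obstacle is keeping a single consistent orientation of $\nu$ so that the sign of the curvature term is correct: the formula holds only if $\text{Hess} f(\nu,\nu)$ and $\frac{N-1}{R}\langle\nabla f,\nu\rangle$ both carry the signs forced by $\nu={\bf x}/R$ and by the convention chosen in the Gauss formula, and a sign slip in $h$ would flip the curvature contribution. A purely computational alternative, closer in spirit to the constraint-manifold formulas of \cite{first-order} and \cite{second-order}, is to use the tangential gradient $\nabla_{S_R^{N-1}}\widetilde f=P\nabla f$ with the orthogonal projector $P=\mathbb{I}_N-\frac{1}{R^2}{\bf x}{\bf x}^t$ and to compute $\Delta_{S_R^{N-1}}\widetilde f=\text{div}_{S_R^{N-1}}(P\nabla f)$; there the analogous difficulty is correctly accounting for the derivative of $P$ along the surface, which is precisely the term producing the $\langle{\bf x},\nabla f\rangle$ curvature contribution.
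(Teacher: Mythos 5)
Your argument is correct, including the signs: with $h(X,Y)=\langle\bar\nabla_XY,\nu\rangle$ and $\nu={\bf x}/R$ one indeed gets $h(X,Y)=-\frac1R\langle X,Y\rangle$, so the trace of the identity $\text{Hess}_{S_R^{N-1}}\widetilde f(X,Y)=\text{Hess}f(X,Y)+h(X,Y)\langle\nabla f,\nu\rangle$ over a tangent orthonormal frame, completed by $\nu$, yields exactly \eqref{Laplace-sfera-2022}. Note, however, that the present paper does not prove this statement at all: Theorem \ref{laplacian-sfera-10} is imported verbatim from \cite{laplacian}, so there is no in-paper proof to compare against. Judging from the titles and use of \cite{first-order} and \cite{second-order}, the cited source derives it by the ``embedded/constrained'' computational route --- essentially the alternative you sketch at the end, writing $\nabla_{S_R^{N-1}}\widetilde f=\bigl(\mathbb{I}_N-\frac{1}{R^2}{\bf x}{\bf x}^t\bigr)\nabla f$ and tracking the derivative of the projector --- whereas your primary argument is the intrinsic Gauss-formula computation. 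The two are equivalent; the geometric version makes the provenance of the two correction terms (normal Hessian component and mean curvature $-\frac{N-1}{R}$) more transparent, while the projector version generalizes mechanically to higher-codimension constraint sets such as $SO(N)\subset\mathcal{M}_{N\times N}(\R)$, which is why the paper's framework is built on it.
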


\begin{thm}\label{teorema-principala}\emph{(\cite{laplacian})}
Let $\widetilde{f}:SO(N)\to \R$ be a smooth function and $f:\mathcal{M}_{N\times N}(\R)\to \R$ a smooth prolongation of $\widetilde{f}$. Then, for $U\in SO(N)$, we have 
\begin{equation}\label{laplacian-ort}
\Delta_{SO(N)}\widetilde{f}(U)=\frac{1}{2}\Delta f(U)-\frac{N-1}{2}\tr\left(U^t[\nabla f](U)\right)-\frac{1}{2}\tr\left(\Lambda(U)[\emph{Hess}f](U)\right),
\end{equation}
where the matrix $\Lambda(U)\in \mathcal{M}_{N^2\times N^2}(\R)$ is defined by 
\begin{equation*}\label{lambda-matrix}
\Lambda(U):=\left[\begin{array}{c|c|c}
{\bf u}_1 {\bf u}_1^t&\dots&{\bf u}_N {\bf u}_1^t\\
\hline
\vdots&\ddots&\vdots\\
\hline
\rule{0pt}{12pt}{\bf u}_1 {\bf u}_N^t&\dots&{\bf u}_N {\bf u}_N^t\end{array}\right].
\end{equation*}
\end{thm}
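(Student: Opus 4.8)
The plan is to treat $SO(N)$ as a codimension-$\tfrac{N(N+1)}{2}$ submanifold of $(\mathcal{M}_{N\times N}(\R),{\bf g}_{_{\text{Frob}}})\cong(\R^{N^2},\langle\cdot,\cdot\rangle)$ and to invoke the general relation between the ambient (Euclidean) Laplacian and the intrinsic Laplacian of a restricted function. For an isometrically embedded $M\subset\R^n$ with orthonormal normal frame $\nu_1,\dots,\nu_c$ and (unnormalized) mean curvature vector $\vec H=\sum_i \vec{II}(e_i,e_i)$, the Gauss formula gives $\text{Hess}_M\widetilde f(X,Y)=\text{Hess}f(X,Y)+\langle\nabla f,\vec{II}(X,Y)\rangle$ on tangent vectors, and tracing over a tangent orthonormal frame yields
\[
\Delta_M\widetilde f=\Delta f-\sum_{a=1}^{c}\text{Hess}f(\nu_a,\nu_a)+\langle\nabla f,\vec H\rangle .
\]
Thus the proof reduces to two computations in the Frobenius inner product: the normal Hessian sum and the mean curvature vector. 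The needed input is the tangent/normal splitting at $U\in SO(N)$: differentiating $U^tU=\mathbb{I}_N$ gives $T_USO(N)=\{U\Omega\,|\,\Omega^t=-\Omega\}$, and since $\langle U\Omega,US\rangle=\tr(\Omega^tS)=0$ for $\Omega$ skew and $S$ symmetric, the normal space is $\{US\,|\,S^t=S\}$.

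For the normal Hessian sum I would take the orthonormal basis $\nu_a=US_a$, where $\{S_a\}$ is the standard orthonormal basis of symmetric matrices (note $\|US\|_{_{\text{Frob}}}=\|S\|_{_{\text{Frob}}}$). Writing $\text{Hess}f(A,B)=\text{vec}(A)^t[\text{Hess}f]\,\text{vec}(B)$ and $\text{vec}(US_a)=(\mathbb{I}_N\otimes U)\,\text{vec}(S_a)$, and letting $P:=\sum_a \text{vec}(S_a)\text{vec}(S_a)^t$ be the orthogonal projector onto vectorized symmetric matrices, one gets
\[
\sum_a\text{Hess}f(\nu_a,\nu_a)=\tr\big([\text{Hess}f]\,(\mathbb{I}_N\otimes U)\,P\,(\mathbb{I}_N\otimes U^t)\big).
\]
Since $P=\tfrac12(\mathbb{I}_{N^2}+K)$ with $K$ the commutation matrix ($K\,\text{vec}(X)=\text{vec}(X^t)$), the identities $(\mathbb{I}_N\otimes U)(\mathbb{I}_N\otimes U^t)=\mathbb{I}_{N^2}$ and $K(\mathbb{I}_N\otimes U^t)=(U^t\otimes\mathbb{I}_N)K$ collapse the middle matrix to $\tfrac12(\mathbb{I}_{N^2}+(U^t\otimes U)K)$. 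The key point is then $(U^t\otimes U)K=\Lambda(U)$, which I would verify as bilinear forms: $\text{vec}(A)^t(U^t\otimes U)K\,\text{vec}(B)=\tr(U^tAU^tB)=\text{vec}(A)^t\Lambda(U)\,\text{vec}(B)$ for all $A,B$. Hence $\sum_a\text{Hess}f(\nu_a,\nu_a)=\tfrac12\Delta f+\tfrac12\tr(\Lambda(U)[\text{Hess}f])$, which already accounts for both the factor $\tfrac12$ and the appearance of $\Lambda(U)$ in \eqref{laplacian-ort}.

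It remains to compute $\vec H$. Using the left-invariant extensions $X_\Omega(U)=U\Omega$ of the orthonormal basis $\Omega_{ij}=\tfrac1{\sqrt2}(E_{ij}-E_{ji})$, $i<j$, of skew matrices, the flat ambient derivative is $\bar\nabla_{X_\Omega}X_\Omega(U)=U\Omega^2$; since $\Omega^2$ is symmetric, this is already normal, so $\vec{II}(X_\Omega,X_\Omega)=U\Omega^2$. A direct computation gives $\Omega_{ij}^2=-\tfrac12(E_{ii}+E_{jj})$, whence $\sum_{i<j}\Omega_{ij}^2=-\tfrac{N-1}{2}\mathbb{I}_N$, so $\vec H=-\tfrac{N-1}{2}U$ and $\langle\nabla f,\vec H\rangle=-\tfrac{N-1}{2}\tr(U^t[\nabla f])$. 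Substituting the two computations into the Gauss trace formula produces exactly \eqref{laplacian-ort}.

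The main obstacle is the middle step: keeping the vectorization conventions consistent and recognizing that the symmetrization projector $\tfrac12(\mathbb{I}_{N^2}+K)$, conjugated by $\mathbb{I}_N\otimes U$, is precisely $\tfrac12(\mathbb{I}_{N^2}+\Lambda(U))$. This is where both the factor $\tfrac12$ and the matrix $\Lambda(U)$ are forced, and it is the computation most prone to index and transpose errors; the mean-curvature part, by contrast, is short once one observes that $U\Omega^2$ is automatically normal. The same result can alternatively be derived from the constraint-manifold gradient and Hessian formulas of \cite{first-order,second-order}, using the constraints $c_{ij}(U)=\langle{\bf u}_i,{\bf u}_j\rangle-\delta_{ij}$, whose gradients $\nabla c_{ij}=U(E_{ij}+E_{ji})$ span the same normal space.
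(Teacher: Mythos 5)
Your argument is correct, but note that the paper itself offers no proof of this theorem: it is imported verbatim from \cite{laplacian}, whose derivation rests on the constraint-manifold (embedded gradient/Hessian with Lagrange-multiplier functions for the constraints $\langle{\bf u}_i,{\bf u}_j\rangle=\delta_{ij}$) formalism of \cite{first-order,second-order} --- exactly the alternative you mention in your last sentence. What you give instead is a self-contained intrinsic-vs-ambient derivation: $\Delta_M\widetilde f=\Delta f-\sum_a\mathrm{Hess}f(\nu_a,\nu_a)+\langle\nabla f,\vec H\rangle$, with the normal space $\{US: S^t=S\}$, the projector identity $(\mathbb{I}_N\otimes U)\tfrac12(\mathbb{I}_{N^2}+K_{NN})(\mathbb{I}_N\otimes U^t)=\tfrac12\bigl(\mathbb{I}_{N^2}+(U^t\otimes U)K_{NN}\bigr)$, and the mean curvature $\vec H=-\tfrac{N-1}{2}U$. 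All three computations check out: the identity $(U^t\otimes U)K_{NN}=\Lambda(U)$ is precisely Lemma \ref{lambda-comm} of the Appendix (since $K_{NN}^2=\mathbb{I}_{N^2}$), the computation $\sum_{i<j}\Omega_{ij}^2=-\tfrac{N-1}{2}\mathbb{I}_N$ is right, and your general trace formula specializes on $S^{N-1}_R$ (normal frame ${\bf x}/R$, $\vec H=-\tfrac{N-1}{R^2}{\bf x}$) to exactly \eqref{Laplace-sfera-2022}, which is a good consistency check. What your route buys is geometric transparency --- the $\tfrac12$ and the matrix $\Lambda(U)$ are forced by the symmetrization projector conjugated into the normal bundle, and the coefficient $\tfrac{N-1}{2}$ is literally the mean curvature of $SO(N)$ in $\mathcal{M}_{N\times N}(\R)$; what the constraint-based route buys is uniformity with the Stiefel-manifold machinery of \cite{first-order,second-order}, where an orthonormal normal frame is less convenient than the explicit constraint gradients. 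One small slip in your verification of the key identity: $\vc(A)^t(U^t\otimes U)K_{NN}\vc(B)=\tr(A^tUB^tU)$ rather than $\tr(U^tAU^tB)$; this is immaterial here because the Hessian is symmetric, but worth fixing if you write it up.
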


Applying the formula \eqref{Laplace-sfera-2022} to harmonic homogeneous polynomials, one obtains that $-\frac{k(k+N-2)}{R^2}$, $k\in \mathbb{N}$, are eigenvalues for $\Delta_{S_R^{N-1}}$. An important result proves that these are the only eigenvalues of $\Delta_{S_R^{N-1}}$. For details see \cite{axler}, among other classical references.

\section{Spherical eigenvalues of the Laplace-Beltrami operator on $SO(N)$}\label{spherical-eigenvalues-121}

In this section, we discuss the connection between the Laplace-Beltrami operator $\Delta_{SO(N)}$ and the Laplace-Beltrami operator on the sphere. As a consequence, we show that the values $-\frac{k(k+N-2)}{2}$, $k\in \mathbb{N}$, are among the eigenvalues of $\Delta_{SO(N)}$. Also, we prove that the Gegenbauer polynomials in any component $u_{ij}$ of an orthogonal matrix are eigenfunctions of $\Delta_{SO(N)}$ corresponding to the eigenvalues  $-\frac{k(k+N-2)}{2}$.

It is well known that a subset of eigenvalues for the Laplace-Beltrami operator on a Riemannian manifold $(M,{\bf g})$ can be obtained as eigenvalues of the Laplace-Beltrami operator on a (sometimes) simpler Riemannian manifold $(\overline{M}, \overline{{\bf g}})$ when there exists a Riemannian submersion $\pi:M\to \overline{M}$ such that all the fibers of $\pi$ are totally geodesic submanifolds of $M$, see \cite{wallach}, \cite{bergery}, \cite{svirkin}. In this case, the Laplace-Beltrami operators are linked by the following formula:
\begin{equation*}
(\Delta_{\overline{M}}h)\circ \pi=\Delta_M(h\circ\pi),\,\,\,h\in \mathcal{C}^{\infty}(\overline{M}).
\end{equation*}

As a straightforward consequence of this result, one has that all the eigenvalues of $\Delta_{\overline{M}}$ are also eigenvalues of $\Delta_{{M}}$. In general, this inclusion is strict. 

For the case of the special orthogonal group $(SO(N), {\bf g}_{_{\text{Frob}}})$ endowed with the induced Frobenius metric from $\mathcal{M}_{N\times N}(\R)$, we have the following Riemannian submersions depicted in the diagram.  

\medskip
\begin{center}
\begin{tikzcd}
                                                                                                &  &  & {\left(SO(N),{\bf g}_{_{\text{Frob}}}\right)} \arrow[d, "\pi"] \arrow[rrrd, "h\circ \pi"] \arrow[llld, "\widetilde{\pi}"'] &  &  &              \\
{\left( S^{N-1},2{\bf g}_{\hbox{can}}\right)} \arrow[rrr, "d_{\sqrt{2}}"'] &  &  & {\left(S^{N-1}_{\sqrt{2}},{\bf g}_{\hbox{can}}\right)} \arrow[rrr, "h"']                                     &  &  & {\mathbb{R}}
\end{tikzcd}
\end{center}
\medskip

The projection $\widetilde{\pi}$ takes a rotation $U=\left[ {\bf u}_1 \, \dots \, {\bf u}_N\right]\in SO(N)$ into its last column ${\bf u}_N\in S^{N-1}$. In order to make this projection a Riemannian submersion, on the sphere $S^{N-1}$, we have to consider the metric $2{\bf g}_{\text{can}}$ which is twice the induced Euclidean metric from $\R^N$. The dilation $d_{\sqrt{2}}:(S^{N-1},2{\bf g}_{\text{can}})\to (S^{N-1}_{\sqrt{2}},{\bf g}_{\text{can}})$ being an isometry makes the projection $\pi:(SO(N), {\bf g}_{_{\text{Frob}}})\to (S^{N-1}_{\sqrt{2}},{\bf g}_{\text{can}})$ a Riemannian submersion. The fibers of $\pi$ are isomorphic with $SO(N-1)$ and are totally geodesic submanifolds of $SO(N)$. 
As a consequence of the above considerations, one obtains the following well known result.

\begin{prop}\label{valori-proprii-sfera}
For all $k\in \mathbb{N}$, $-\frac{k(k+N-2)}{2}$ are among the eigenvalues of the Laplace-Beltrami operator $\Delta_{SO(N)}$.
\end{prop}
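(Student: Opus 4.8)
The plan is to transfer the spectrum of the round sphere to $SO(N)$ by pulling back eigenfunctions through the Riemannian submersion $\pi$, exploiting the fact (recalled in the preamble) that its fibers are totally geodesic.

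First I would recall the classical spectrum of the sphere. As noted right after Theorem~\ref{teorema-principala}, the eigenvalues of $\Delta_{S_R^{N-1}}$ are exactly $-\frac{k(k+N-2)}{R^2}$, $k\in\mathbb{N}$, with eigenfunctions obtained by restricting degree-$k$ harmonic homogeneous polynomials to the sphere. Specializing to the radius $R=\sqrt{2}$ appearing in the diagram, the operator $\Delta_{S^{N-1}_{\sqrt{2}}}$ acquires the eigenvalues $-\frac{k(k+N-2)}{2}$. Accordingly, for each $k$ I would fix a nonzero eigenfunction $h_k\in\mathcal{C}^\infty\!\left(S^{N-1}_{\sqrt{2}}\right)$ satisfying $\Delta_{S^{N-1}_{\sqrt{2}}}h_k=-\frac{k(k+N-2)}{2}\,h_k$.

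Next I would invoke the submersion identity stated above. Since $\pi:\left(SO(N),{\bf g}_{_{\text{Frob}}}\right)\to\left(S^{N-1}_{\sqrt{2}},{\bf g}_{\text{can}}\right)$ is a Riemannian submersion all of whose fibers (isomorphic to $SO(N-1)$) are totally geodesic submanifolds, the relation $(\Delta_{\overline M}h)\circ\pi=\Delta_M(h\circ\pi)$ applies to $h=h_k$, giving
\begin{equation*}
\Delta_{SO(N)}(h_k\circ\pi)=\left(\Delta_{S^{N-1}_{\sqrt{2}}}h_k\right)\circ\pi=-\frac{k(k+N-2)}{2}\,(h_k\circ\pi).
\end{equation*}
Thus $h_k\circ\pi$ is an eigenfunction of $\Delta_{SO(N)}$ for the eigenvalue $-\frac{k(k+N-2)}{2}$.

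Finally I would verify that the pulled-back function does not vanish identically, so that it is a bona fide eigenfunction. Because $\pi$ is a surjective submersion onto the sphere, $h_k\circ\pi\equiv 0$ would force $h_k\equiv 0$; as $h_k\neq 0$ by construction, we conclude $h_k\circ\pi\neq 0$, and therefore $-\frac{k(k+N-2)}{2}$ belongs to the spectrum of $\Delta_{SO(N)}$ for every $k\in\mathbb{N}$. I do not expect a genuine obstacle here: the argument is a purely formal transfer once the pullback identity is available. The only substantive input is the geometric claim already supplied in the preamble—that $\pi$ is a Riemannian submersion with totally geodesic fibers—so if anything is delicate it is the verification of that claim (the choice of the rescaled metric $2{\bf g}_{\text{can}}$ making $\widetilde\pi$ Riemannian, and the total geodesy of the $SO(N-1)$-fibers), which the excerpt takes as established.
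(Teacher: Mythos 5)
Your argument is correct and is essentially the paper's own derivation of this proposition: the statement is presented there precisely as a consequence of the Riemannian submersion $\pi:\left(SO(N),{\bf g}_{_{\text{Frob}}}\right)\to\left(S^{N-1}_{\sqrt{2}},{\bf g}_{\text{can}}\right)$ with totally geodesic $SO(N-1)$-fibers and the pullback identity $(\Delta_{\overline M}h)\circ\pi=\Delta_M(h\circ\pi)$. The paper then additionally re-derives the same conclusion computationally (Lemma~\ref{lema-sfera-123}) from the explicit formulas of Theorems~\ref{laplacian-sfera-10} and~\ref{teorema-principala}, but that is a supplementary second proof rather than a gap in yours.
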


Next, we show how one can obtain the above result as a straightforward consequence of applying the formulas from Theorems  \ref{laplacian-sfera-10} and \ref{teorema-principala}.

Let $h:(S^{N-1}_{\sqrt{2}},{\bf g}_{\text{can}})\to \R$ a smooth function and $f:SO(N)\to \R$ defined by $f(U):=h(\sqrt{2}{\bf u}_N)$. 
In order not to burden the notations, we will use the same notations for the prolongations of $h$ and $f$ from their respective manifolds to the corresponding ambient spaces.
We obtain:
$$[\nabla f](U)=\sqrt{2}\left[\begin{array}{cccc} {\bf 0} & \dots & {\bf 0} & \nabla h(\sqrt{2}{\bf u}_N) \end{array}\right],$$ 
$$[\text{Hess} f](U)=2\left[\begin{array}{cccc} \mathbb{O}_N & \dots & \mathbb{O}_N &  \mathbb{O}_N \\
\vdots & \ddots & \vdots & \vdots \\ 
\mathbb{O}_N & \dots & \mathbb{O}_N &  \mathbb{O}_N \\
\mathbb{O}_N & \dots & \mathbb{O}_N & [\text{Hess}\, h](\sqrt{2}{\bf u}_N) \end{array}\right].$$ 
Also,
$$\tr(U^t [\nabla f](U))=\sqrt{2}\tr\left(\left[\begin{array}{c} {\bf u}^t_1 \\ \vdots \\  {\bf u}^t_N \end{array}\right] \left[\begin{array}{ccc} {\bf 0} & \dots & \nabla h(\sqrt{2}{\bf u}_N) \end{array}\right]\right)=\tr((\sqrt{2}{\bf u}_N)^t\nabla h(\sqrt{2}{\bf u}_N)).$$

\begin{align*}\tr(\Lambda(U)[\text{Hess}f](U)) & =2\tr\left(\left[\begin{array}{ccc} {\bf u}_1 {\bf u}_1^t & \dots & {\bf u}_N {\bf u}_1^t \\
\vdots & \ddots & \vdots \\ 
{\bf u}_1 {\bf u}_N^t & \dots & {\bf u}_N {\bf u}_N^t  \end{array}\right]\cdot \left[\begin{array}{cccc} \mathbb{O}_N & \dots & \mathbb{O}_N \\
\vdots & \ddots & \vdots \\ 
\mathbb{O}_N & \dots & [\text{Hess}\, h](\sqrt{2}{\bf u}_N) \end{array}\right]\right)\\
& = \tr ((\sqrt{2}{\bf u}_N)(\sqrt{2} {\bf u}_N)^t[\text{Hess}\, h](\sqrt{2}{\bf u}_N)).
\end{align*}

\begin{lemma}\label{lema-sfera-123}
Let $h:(S^{N-1}_{\sqrt{2}},{\bf g}_{\text{can}})\to \R$ a smooth function and $f:SO(N)\to \R$ defined by $f(U):=h(\sqrt{2}{\bf u}_N)$.
Then,
$$\Delta_{SO(N)}f(U)=\Delta_{S^{N-1}_{\sqrt{2}}}h(\sqrt{2}{\bf u}_N).$$
\end{lemma}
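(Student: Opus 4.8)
The plan is to feed the quantities already computed just above directly into the master formula \eqref{laplacian-ort} of Theorem \ref{teorema-principala} and then to recognize the outcome as an instance of the spherical formula \eqref{Laplace-sfera-2022} of Theorem \ref{laplacian-sfera-10}. To fix notation, I would set ${\bf x}:=\sqrt{2}\,{\bf u}_N$; since ${\bf u}_N$ is a unit vector, ${\bf x}$ lies on $S^{N-1}_{\sqrt{2}}$, the sphere of radius $R=\sqrt{2}$, so that $R^2=2$.

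The only ingredient not yet displayed in the preceding computations is the ambient Laplacian $\Delta f(U)$. Because $\Delta f(U)=\tr\left([\text{Hess}f](U)\right)$ and that Hessian has a single nonzero block, namely the lower-right $N\times N$ block equal to $2[\text{Hess}\,h](\sqrt{2}{\bf u}_N)$, I would immediately obtain
$$\Delta f(U)=2\,\tr\left([\text{Hess}\,h]({\bf x})\right)=2\,(\Delta h)({\bf x}).$$

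Inserting this identity together with the two relations already established above,
$$\tr\left(U^t[\nabla f](U)\right)=\left<{\bf x},\nabla h({\bf x})\right>,\qquad \tr\left(\Lambda(U)[\text{Hess}f](U)\right)=\tr\left({\bf x}\,{\bf x}^t[\text{Hess}\,h]({\bf x})\right),$$
into \eqref{laplacian-ort}, the prefactor $\tfrac{1}{2}$ cancels the factor $2$ in $\Delta f(U)$ and yields
$$\Delta_{SO(N)}f(U)=(\Delta h)({\bf x})-\tfrac{1}{2}\tr\left({\bf x}\,{\bf x}^t[\text{Hess}\,h]({\bf x})\right)-\tfrac{N-1}{2}\left<{\bf x},\nabla h({\bf x})\right>.$$

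Finally I would compare this expression term by term with \eqref{Laplace-sfera-2022} specialized to radius $R=\sqrt{2}$, for which $\tfrac{1}{R^2}=\tfrac{1}{2}$; the two right-hand sides coincide exactly, so $\Delta_{SO(N)}f(U)=\Delta_{S^{N-1}_{\sqrt{2}}}h({\bf x})$, which is the assertion. I expect no genuine obstacle here, since the whole argument is a bookkeeping substitution into two known formulas; the only point requiring care is tracking the two factors of $\sqrt{2}$ produced by the chain rule in $\nabla h$ and in $[\text{Hess}\,h]$ — already absorbed above — so that the coefficients line up precisely with those of the spherical Laplacian.
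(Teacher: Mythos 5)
Your proposal is correct and follows essentially the same route as the paper: compute $\Delta f(U)=2\Delta h(\sqrt{2}\,{\bf u}_N)$ from the single nonzero Hessian block, substitute it together with the two previously displayed trace identities into formula \eqref{laplacian-ort}, and recognize the result as \eqref{Laplace-sfera-2022} with $R=\sqrt{2}$. No gaps; the bookkeeping of the factors of $\sqrt{2}$ is exactly what the paper's proof relies on as well.
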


\begin{proof}
It is easy to observe that the Euclidean Laplacian of functions $f$ and $h$ are in the relation $\Delta f(U)=2\Delta h(\sqrt{2}{\bf u}_N)$. 

From Theorem \ref{teorema-principala} and the above computations we obtain,
\begin{align*}
\Delta_{SO(N)}f(U)= & \Delta h(\sqrt{2}{\bf u}_N)-\frac{N-1}{(\sqrt{2})^2} \tr\left((\sqrt{2}{\bf u}_N)^t\nabla h(\sqrt{2}{\bf u}_N)\right)\\
& -\frac{1}{(\sqrt{2})^2} \tr \left((\sqrt{2}{\bf u}_N) (\sqrt{2}{\bf u}_N)^t[\text{Hess}\, h](\sqrt{2}{\bf u}_N)\right).
\end{align*}
From the formula \eqref{Laplace-sfera-2022} the right side of the above equality is $\Delta_{S^{N-1}_{\sqrt{2}}}h(\sqrt{2}{\bf u}_N)$.
\end{proof}

This shows that the eigenvalues of the operator $\Delta_{S^{N-1}_{\sqrt{2}}}$, which are $-\frac{k(k+N-2)}{2}$, are also eigenvalues of the operator $\Delta_{SO(N)}$. Therefore, we have obtained another proof of Proposition \ref{valori-proprii-sfera}. 
The above Lemma holds true for any column ${\bf u}_j$ of the rotation matrix $U$.

Next, we will search for eigenfunctions of the particular form $f(U)= h(\sqrt{2}{\bf u}_j)=q(u_{ij})$, for $i,j\in\{1,\dots,N\}$ fixed and $q:[-1,1]\to \R$ a smooth function. From Lemma \ref{lema-sfera-123}, such a function $f$ can be an eigenfunction only for an eigenvalue of the form $-\frac{k(k+N-2)}{2}$, $k\in \mathbb{N}$. The equality $\Delta_{SO(N)}f=-\frac{k(k+N-2)}{2} f$, using formula  \eqref{laplacian-ort}, becomes 
\begin{equation*}
(1-u_{ij}^2)q''(u_{ij})-(N-1)u_{ij}q'(u_{ij})+k(k+N-2)q(u_{ij})=0,
\end{equation*} 
which is the differential equation of Gegenbauer type, see \cite{watson}. The only bounded solutions of this equation on $[-1,1]$ are the Gegenbauer polynomials 
$C_k^{\left(\frac{N-2}{2}\right)}(u_{ij})$, see \cite{otieno}. These polynomials are also intrinsically related to the representation theory of $SO(N)$, see the beautiful exposition in  \cite{vilenkin}.


\section{A flag of invariant subspaces for the Laplace-Beltrami operator $\Delta_{SO(N)}$}\label{flag}

We introduce a flag of trace polynomial functions that is let invariant by the Laplace-Beltrami operator. The Laplace-Beltrami operator restricted to this flag will take an upper block triangular form. 

We start this section by presenting a formula for the Laplace-Beltrami operator applied on a product of functions defined on a general Riemannian manifold.

\begin{prop}\label{general-product-99}
Let $(M,{\bf g})$ be a Riemannian manifold and  $f_1,\dots,f_k$ a set of smooth real functions on $M$. We have the following product rule: 
$$\Delta_{M}(f_1 \dots f_k)=\sum_{i=1}^k f_1\dots \widehat{f_i}\dots f_k \Delta_{M}f_i+2\sum_{1\leq i<j\leq k}f_1 \dots \widehat{f_i}\dots \widehat{f_j}\dots f_k\,{\bf g}(\nabla_{M}f_i,\nabla_{M}f_j),$$
where $~\widehat{\cdot}$ means that the factor is taken out from the product. 
\end{prop}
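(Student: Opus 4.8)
The plan is to prove this product rule for the Laplace-Beltrami operator by induction on $k$, reducing everything to the two fundamental identities that hold on any Riemannian manifold: the linearity of $\Delta_M$ and the scalar product rule $\Delta_M(fg)=f\Delta_M g + g\Delta_M f + 2\,{\bf g}(\nabla_M f,\nabla_M g)$. The latter is the genuinely analytic input; it follows from the defining relation $\Delta_M = \operatorname{div}\circ\nabla_M$ together with the Leibniz rule $\nabla_M(fg)=f\nabla_M g+g\nabla_M f$ and the divergence product rule $\operatorname{div}(fX)=f\operatorname{div}X+{\bf g}(\nabla_M f,X)$. Expanding $\operatorname{div}\bigl(g\nabla_M f+f\nabla_M g\bigr)$ immediately yields the cross term $2\,{\bf g}(\nabla_M f,\nabla_M g)$, so the base case $k=2$ is essentially a direct computation that I would record as a lemma or simply cite as standard.

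For the inductive step I would assume the formula holds for $k-1$ factors and write $f_1\cdots f_k = (f_1\cdots f_{k-1})\,f_k$, applying the two-factor rule to the bracketed product and the single factor $f_k$. This produces three terms: $f_k\,\Delta_M(f_1\cdots f_{k-1})$, then $(f_1\cdots f_{k-1})\,\Delta_M f_k$, and finally the cross term $2\,{\bf g}\bigl(\nabla_M(f_1\cdots f_{k-1}),\nabla_M f_k\bigr)$. The first term is expanded by the induction hypothesis, contributing the $i<k$ portion of the single-Laplacian sum and (after multiplying through by $f_k$) the $i<j\le k-1$ portion of the gradient sum. The second term supplies the missing $i=k$ term of the single-Laplacian sum. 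The crux is the third term: I would use the Leibniz rule for the gradient, $\nabla_M(f_1\cdots f_{k-1})=\sum_{i=1}^{k-1} f_1\cdots\widehat{f_i}\cdots f_{k-1}\,\nabla_M f_i$, and linearity of ${\bf g}$ in its first slot to rewrite it as $2\sum_{i=1}^{k-1} f_1\cdots\widehat{f_i}\cdots f_{k-1}\,{\bf g}(\nabla_M f_i,\nabla_M f_k)$, which is exactly the collection of gradient-pairing terms with $j=k$.

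The main obstacle, such as it is, is purely bookkeeping: verifying that the three groups of terms reassemble into precisely the claimed double sum over all unordered pairs $1\le i<j\le k$ with the correct omitted factors, and that no term is double-counted or dropped. Concretely, the gradient terms split as those with $j\le k-1$ (from the induction hypothesis, each carrying an extra factor $f_k$ so that $f_k$ is \emph{not} among the hatted indices) and those with $j=k$ (from the cross term, where $f_k$ \emph{is} omitted), and together these exhaust all pairs. I would close the argument by checking that the factor-of-$2$ coefficients match and that the hat notation is consistent across both contributions, after which the induction is complete.

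\medskip

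I should note that no special structure of $SO(N)$ or of trace polynomials is needed here; the statement is geometric and holds on any Riemannian manifold, so the proof lives entirely at the level of the first-order and second-order derivation properties of $\nabla_M$ and $\Delta_M$.
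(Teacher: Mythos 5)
Your proposal is correct and follows essentially the same route as the paper, which proves the statement by induction on $k$ starting from the standard two-factor rule $\Delta_{M}(f_1 f_2)=f_1\Delta_{M}f_2+f_2\Delta_{M}f_1+2\,{\bf g}(\nabla_{M}f_1,\nabla_{M}f_2)$. You simply spell out the bookkeeping of the inductive step (and the derivation of the base case from $\Delta_M=\operatorname{div}\circ\nabla_M$) that the paper leaves implicit.
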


\begin{proof}
The proof is a direct induction starting from the standard formula valid on a general Riemannian manifold, $ \Delta_{M}(f_1 f_2)=f_1\Delta_{M}f_2+f_2\Delta_{M}f_1+2{\bf g}(\nabla_{M}f_1,\nabla_{M}f_2)$.
\end{proof}

For $m\in \mathbb{N}$, we introduce the {trace polynomial} $ p_m:SO(N)\to \R$ given by $p_m(U)=\tr(U^m)$. From this definition, $p_0$ is the constant function $p_0(U)=N$.

\begin{prop}\label{laplacian-tr-u-la-k}
The following relations hold:
\begin{enumerate}[(i)]
\item $\Delta_{SO(N)}p_{1}=-\dfrac{N-1}{2}p_1$ \emph{(\cite{laplacian})}.
\item 
$\Delta_{SO(N)}p_{m}=
\dfrac{m(1+(-1)^m)}{4} p_0+m\sum\limits_{i=0}^{\lfloor\frac{m-1}{2}\rfloor}p_{m-2i}-\frac{m(N+1)}{2}p_{m}-\frac{m}{2}\sum\limits_{j=1}^{m-1} p_{j}p_{m-j},$
 $m\geq 2$.
 \item For $q\geq 2$ a natural number, we have
$$\Delta_{SO(N)}(p_1^q)=-\dfrac{1}{2}\left((N-1)qp_1^q+q(q-1)(p_2-N)p_1^{q-2}\right).$$
\end{enumerate}
\end{prop}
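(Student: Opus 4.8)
The plan is to apply the master formula \eqref{laplacian-ort} from Theorem \ref{teorema-principala} to the prolonged trace polynomials $f(U)=\tr(U^m)$ and $f(U)=p_1^q=(\tr U)^q$, viewed as smooth functions on all of $\mathcal{M}_{N\times N}(\R)$, and then reduce the resulting Euclidean quantities back to trace polynomials on $SO(N)$. The three ingredients needed are the Euclidean gradient $[\nabla f](U)$, the Euclidean Hessian $[\hs f](U)$, and the Euclidean Laplacian $\Delta f(U)$; once these are in hand, the right-hand side of \eqref{laplacian-ort} is assembled and simplified using the orthogonality relation $U^tU=\mathbb{I}_N$ valid on $SO(N)$.

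First I would treat part (ii), the generic case $f(U)=\tr(U^m)$. The key computation is the differential of $\tr(U^m)$: since $d\,\tr(U^m)=m\,\tr(U^{m-1}\,dU)$, the matrix gradient with respect to the Frobenius metric is $[\nabla f](U)=m(U^{m-1})^t=m(U^t)^{m-1}$. Restricting to $SO(N)$ where $U^t=U^{-1}$, this gives $\tr(U^t[\nabla f](U))=m\,\tr(U^{-(m-2)})=m\,\tr(U^{m-2})=m\,p_{m-2}$; I would keep the relation $\tr(U^{-j})=\tr(U^j)=p_j$ in mind throughout, since it is what converts negative powers back into the $p_j$. For the Euclidean Laplacian, differentiating a second time yields $\Delta f(U)=\sum_{k=1}^{m-1}\sum_{\text{pairs}}\dots$; more precisely, the second differential of $\tr(U^m)$ produces $d^2\,\tr(U^m)=m\sum_{a+b=m-2}\tr(U^a\,dU\,U^b\,dU)$, and tracing against the identity (i.e. contracting the two $dU$ slots diagonally) gives $\Delta f(U)=m\sum_{a+b=m-2}\tr(U^a)\tr(U^b)=m\sum_{j=0}^{m-2}p_j\,p_{m-2-j}$, which on $SO(N)$ splits into the constant $p_0$-term, the diagonal-pairing terms, and the products $p_jp_{m-j}$ after re-indexing. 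The hardest single step is the Hessian term $\tr(\Lambda(U)[\hs f](U))$: I expect to show that contracting the block matrix $\Lambda(U)$ against the Euclidean Hessian $[\hs f](U)$ of $\tr(U^m)$ produces, after using $U^tU=\mathbb{I}_N$, a combination of $p_m$ and the $p_jp_{m-j}$ products. Assembling $\tfrac12\Delta f-\tfrac{N-1}{2}\tr(U^t[\nabla f])-\tfrac12\tr(\Lambda[\hs f])$ and collecting coefficients should reproduce the stated formula, with the parity factor $\tfrac{m(1+(-1)^m)}{4}$ arising from whether $m$ is even (contributing a $p_0=N$ term from the middle of the symmetric sum) or odd (no such central term).

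For part (iii), $f(U)=p_1^q=(\tr U)^q$, I would either repeat the same scheme or, more cleanly, invoke the product rule of Proposition \ref{general-product-99} with all factors equal to $p_1$. That gives
$$\Delta_{SO(N)}(p_1^q)=q\,p_1^{q-1}\Delta_{SO(N)}p_1+2\binom{q}{2}p_1^{q-2}\,{\bf g}_{_{\text{Frob}}}(\nabla_{SO(N)}p_1,\nabla_{SO(N)}p_1).$$
The first term is handled immediately by part (i), $\Delta_{SO(N)}p_1=-\tfrac{N-1}{2}p_1$, yielding $-\tfrac{N-1}{2}q\,p_1^q$. For the second term I need the squared Frobenius norm of the intrinsic gradient $\nabla_{SO(N)}p_1$; projecting the Euclidean gradient $[\nabla p_1](U)=\mathbb{I}_N$ onto the tangent space of $SO(N)$ and computing its norm should give ${\bf g}_{_{\text{Frob}}}(\nabla_{SO(N)}p_1,\nabla_{SO(N)}p_1)=N-p_2$ on $SO(N)$ (equivalently $-(p_2-N)$), so that the second term becomes $-q(q-1)(p_2-N)p_1^{q-2}$. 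Combining the two terms and factoring out $-\tfrac12$ gives exactly the claimed identity. The main obstacle across both parts is the careful bookkeeping in the Hessian/gradient-norm contractions — keeping track of where $U^t=U^{-1}$ collapses powers and ensuring the index ranges in the symmetric sums match the floor-function cutoff $\lfloor(m-1)/2\rfloor$ in the statement — but no conceptual difficulty beyond the explicit formula \eqref{laplacian-ort} is expected.
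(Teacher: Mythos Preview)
Your overall plan --- apply formula \eqref{laplacian-ort} to $f=\tr(U^m)$ for part (ii), and use the product rule of Proposition \ref{general-product-99} for part (iii) --- is exactly the paper's route. However, several of the intermediate evaluations you write down are incorrect and, carried through, would not yield the stated identities.

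For part (ii): first, $\tr\!\big(U^t[\nabla p_m](U)\big)=\tr\!\big(U^t\cdot m(U^t)^{m-1}\big)=m\,\tr\!\big((U^t)^m\big)=m\,p_m$, not $m\,p_{m-2}$. Second, and more importantly, you have interchanged the roles of the Euclidean Laplacian and the $\Lambda$-Hessian contraction. Contracting $d^2\tr(U^m)=m\sum_{a+b=m-2}\tr(U^a\,dU\,U^b\,dU)$ over the standard basis $E_{ij}$ of $\mathcal{M}_{N\times N}(\R)$ gives $m\sum_{a+b=m-2}\tr\!\big((U^a)^tU^b\big)$, which on $SO(N)$ collapses to \emph{single} traces $m\sum_{r}p_{|m-2-2r|}$; this is where the floor-sum $\sum_{i=0}^{\lfloor(m-1)/2\rfloor}p_{m-2i}$ and the parity constant $\tfrac{m(1+(-1)^m)}{4}p_0$ come from. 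The products $p_jp_{m-j}$ arise instead from $\tr\!\big(\Lambda(U)[\hs p_m](U)\big)$: using $\Lambda(U)K_{NN}=U^t\otimes U$ together with the Hessian formula $[\hs p_m](U)=mK_{NN}\sum_{r}(U^t)^r\otimes U^{m-2-r}$ gives $m\sum_{r}\tr\!\big((U^t)^{r+1}\big)\tr\!\big(U^{m-r-1}\big)=m\sum_{j=1}^{m-1}p_jp_{m-j}$. For part (iii), your gradient norm is off by a factor of $2$: since $\nabla_{SO(N)}p_1(U)=\tfrac12(\mathbb{I}_N-U^2)$, one has ${\bf g}_{_{\text{Frob}}}(\nabla_{SO(N)}p_1,\nabla_{SO(N)}p_1)=\tfrac12(N-p_2)$; with this correction your product-rule computation does reproduce the claimed formula.
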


\begin{proof} The proof of $(i)$ can be find in \cite{laplacian}.

$(ii)$ We apply the formula \eqref{laplacian-ort},  from \cite{laplacian}, to the case of $p_m$, using the same notation for the prolongation function, i.e. 
\begin{equation*}
\Delta_{SO(N)}p_m(U)=\frac{1}{2}\Delta p_m(U)-\frac{N-1}{2}\tr\left(U^t[\nabla p_m](U)\right)-\frac{1}{2}\tr\left(\Lambda(U)[\text{Hess}\,p_m](U)\right).
\end{equation*}

\noindent By Lemma \ref{trace-comm} and Lemma \ref{xun}, we have\\
$$\begin{aligned}
\dfrac{1}{2}\Delta p_m(U)&=\dfrac{1}{2}\tr([\hs p_m](U))
=\dfrac{1}{2}\tr\left(K_{NN}\left(m \sum_{r=0}^{m-2}\left(U^t\right)^r \otimes U^{m-r-2}\right)\right)\\
&=\dfrac{m}{2} \sum_{r=0}^{m-2}\tr(K_{NN}\left(U^t\right)^r \otimes U^{m-r-2})
=\dfrac{m}{2} \sum_{r=0}^{m-2}\tr(\left(U^t\right)^r\cdot U^{m-r-2}) \\
&= \begin{cases}
m( p_{m-2}+\dots+p_2)(U)+\frac{m}{2}p_0(U) & \text{if}\, m\, \text{is even} \\
m( p_{m-2}+\dots+p_1)(U) & \text{if}\, m\, \text{is odd}
\end{cases} \\
&= \begin{cases}
m(p_m+p_{m-2}+\dots+p_2)(U)-mp_m(U)+\frac{m}{2}p_0(U) & \text{if}\, m\, \text{is even} \\
m(p_m+p_{m-2}+\dots+p_1)(U)-mp_m(U) & \text{if}\, m\, \text{is odd}
\end{cases} \\
&= \dfrac{m(1+(-1)^m)}{4} p_0(U)+m\sum\limits_{i=0}^{\lfloor\frac{m-1}{2}\rfloor}p_{m-2i}(U)-mp_m(U).
\end{aligned}
$$
Also, we have, see \cite{cookbook},
$[\nabla p_m](U)=m(U^t)^{m-1},$
and therefore
$$-\dfrac{N-1}{2}\tr\left(U^t[\nabla p_m](U)\right)=-\dfrac{N-1}{2}\tr\left(m(U^t)^{m-1}\cdot U^t\right)=-\dfrac{m(N-1)}{2}p_m(U).$$
Finally, by Lemma \ref{xun} and Lemma \ref{lambda-comm}, 
$$
\begin{aligned}
    -\frac{1}{2}\tr\left(\Lambda(U) [\hs p_m](U)\right)
    &=-\frac{1}{2}\tr\left(\Lambda(U) \cdot K_{NN}\left(m \sum_{r=0}^{m-2}\left(U^t\right)^r \otimes U^{m-r-2}\right)\right)\\
    &=-\dfrac{m}{2}\sum_{r=0}^{m-2}\tr\left((U^t\otimes U)\left(\left(U^t\right)^r \otimes U^{m-r-2}\right)\right)\\
    &=-\dfrac{m}{2}\sum_{r=0}^{m-2}\tr\left((U^t)^{r+1}\otimes U^{m-r-1}\right)\\
    &=-\dfrac{m}{2}\sum_{r=0}^{m-2}p_{r+1}(U)p_{m-(r+1)}(U)
    =-\dfrac{m}{2}\sum_{j=1}^{m-1}p_j(U)p_{m-j}(U).
\end{aligned}
$$
Putting all together, we obtain the announced formula.

The proof of $(iii)$ is a consequence of Proposition \ref{general-product-99} and Lemma \ref{produs-grad}. 
\end{proof}

For $k\in \mathbb{N}^*$, an integer partition of $k$ is defined by a vector $ \boldsymbol{\lambda}:=(m_1, \dots, m_k)\in \mathbb{N}^k$ such that
$m_1\geq \dots \geq m_k\geq 0$ and $m_1+\dots+m_k=k$. Classically, it is denoted by $\boldsymbol{\lambda}\vdash k$. The partition function, denoted by $\mathcal{P}(k)$, represents the number of integer partitions of $k$. 

For every $\boldsymbol{\lambda}=(m_1, \dots, m_k)\vdash k$, we denote the trace polynomials
 $p_{_{\boldsymbol{\lambda}}}:SO(N)\to \R$ by
\begin{align*}
p_{_{\boldsymbol{\lambda}}}(U):=  \tr(U^{m_1})\dots \tr(U^{m_s}) 
=  p_{m_1}(U)\dots p_{m_s}(U),
\end{align*}
where $s:=\max\{i \,|\,m_i\geq 1\}$. When $\boldsymbol{\lambda}\neq (1,\dots,1)$, then we define  $r:=\max\{i \,|\,m_i\geq 2\}$.

Since $\tr(U^m)=\sum\limits_{i=1}^N\lambda_i^m$, where $\lambda_1,\dots, \lambda_N$ are the eigenvalues of $U$, we can write  $p_{_{\boldsymbol{\lambda}}}(U)$ as a product of power sums in $N$ complex variables.
{\bf The polynomials given by product of power sums in $N$ complex variables are algebraically independent over the field of complex numbers. However, when we restrict each of these variables $\lambda_i$ to the unit circle the independence of the polynomials $p_{\boldsymbol{\lambda}}$ is lost. 
}

 Starting with the real vector space
$$V_k:=\text{Span}_{\R}\{p_{_{\boldsymbol{\lambda}}}:SO(N)\to \R\,|\,\boldsymbol{\lambda}\vdash k\},$$
we introduce the following flag of vector spaces 
$$V_0\subseteq V_{\leq 1}\subseteq \dots \subseteq V_{\leq k}\subseteq \dots,$$
where 
$$V_{\leq k}:= \text{Span}_{\R}\{p_{_{\boldsymbol{\lambda}}}:SO(N)\to \R\,|\,\boldsymbol{\lambda}\vdash j\,\text{and}\,j\leq k\}.$$

\begin{thm}\label{flag-12}
The Laplace-Beltrami operator $\Delta_{SO(N)}$ leaves invariant the vector space $V_{\leq k}$, for all $k\in \mathbb{N}$; i.e.  $$\Delta_{SO(N)}(V_{\leq k})\subseteq V_{\leq k}.$$
\end{thm}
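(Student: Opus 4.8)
The plan is to prove the sharper statement that $\Delta_{SO(N)}$ sends each generating trace polynomial $p_{\boldsymbol{\lambda}}$ with $\boldsymbol{\lambda}\vdash j$ into $V_{\leq j}$; since these generators with $j\le k$ span $V_{\leq k}$ and $\Delta_{SO(N)}$ is linear, this gives the theorem, and it simultaneously yields the upper block triangular form, with $j$ playing the role of block index. Before starting I would record the elementary fact that multiplication of trace polynomials corresponds to concatenation of partitions, so degrees add and $V_{\leq a}\cdot V_{\leq b}\subseteq V_{\leq a+b}$; this is the bookkeeping device that makes all the degree estimates below go through.

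Fixing $\boldsymbol{\lambda}=(m_1,\dots,m_s)\vdash j$ and writing $p_{\boldsymbol{\lambda}}=p_{m_1}\cdots p_{m_s}$, I would apply the product rule of Proposition \ref{general-product-99}, which produces two families of terms. For the first family, $\big(\prod_{l\neq i}p_{m_l}\big)\Delta_{SO(N)}p_{m_i}$, I would invoke Proposition \ref{laplacian-tr-u-la-k}: each $\Delta_{SO(N)}p_{m_i}$ is a linear combination of $p_0$, the $p_{m_i-2l}$, $p_{m_i}$, and the products $p_rp_{m_i-r}$, all of which have degree at most $m_i$, so $\Delta_{SO(N)}p_{m_i}\in V_{\leq m_i}$. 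Multiplying by $\prod_{l\neq i}p_{m_l}\in V_{\leq j-m_i}$ lands this family in $V_{\leq j}$.

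The second family, $2\big(\prod_{l\neq a,b}p_{m_l}\big){\bf g}_{_{\text{Frob}}}(\nabla_{SO(N)}p_{m_a},\nabla_{SO(N)}p_{m_b})$ over pairs $a<b$, is where the real work lies. Here I would use the closed formula for the inner product of gradients of power sums,
$${\bf g}_{_{\text{Frob}}}(\nabla_{SO(N)}p_{a},\nabla_{SO(N)}p_{b})=\frac{ab}{2}\left(p_{|a-b|}-p_{a+b}\right),$$
which is Lemma \ref{produs-grad} and specializes at $a=b=1$ to the identity already used in Proposition \ref{laplacian-tr-u-la-k}(iii). Should I need to re-derive it, I would project the Euclidean gradient $[\nabla p_m](U)=m(U^t)^{m-1}$ onto $T_USO(N)$, obtaining $\nabla_{SO(N)}p_m=\tfrac m2\big((U^t)^{m-1}-U^{m+1}\big)$, expand the Frobenius trace, and simplify using $\tr(U^{-r})=\tr(U^r)=p_r$ for $U\in SO(N)$. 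Since the right-hand side lies in $V_{\leq a+b}$, multiplication by $\prod_{l\neq a,b}p_{m_l}\in V_{\leq j-m_a-m_b}$ again lands in $V_{\leq j}$.

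Adding the two families gives $\Delta_{SO(N)}p_{\boldsymbol{\lambda}}\in V_{\leq j}\subseteq V_{\leq k}$. The one genuinely substantive step is the gradient–inner-product identity (Lemma \ref{produs-grad}); everything else is tracking that neither $\Delta_{SO(N)}$ applied to a single factor nor a pairwise gradient term raises the total degree. I expect the main obstacle to be exactly that degree bound on the pairwise gradient terms, since it is what forbids the blocks below the diagonal and thus secures the upper block triangular form.
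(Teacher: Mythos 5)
Your proposal is correct and follows essentially the same route as the paper's proof: the product rule of Proposition \ref{general-product-99}, the formula for $\Delta_{SO(N)}p_m$ from Proposition \ref{laplacian-tr-u-la-k}, and the gradient inner-product identity $2\left\langle\nabla_{SO(N)}p_a,\nabla_{SO(N)}p_b\right\rangle=ab\left(p_{|a-b|}-p_{a+b}\right)$, which in the paper is Lemma \ref{scalar-gradienti-76} rather than Lemma \ref{produs-grad} (the latter is the general two-function trace formula from which it is derived). The only organizational difference is that you treat all partitions uniformly, whereas the paper splits into three cases (all parts $\geq 2$, all parts equal to $1$, and mixed), a distinction your degree bookkeeping shows to be unnecessary.
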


\begin{proof}
For $k=0$ the inclusion is trivial and for $k=1$, we have Proposition \ref{laplacian-tr-u-la-k}-$(i)$. Next, we distinguish several cases. 

We start with the case $r=s$, i.e. $p_{_{\boldsymbol{\lambda}}}(U):=\tr(U^{m_1})\dots \tr(U^{m_r})$, where $m_1\geq \dots\geq m_r\geq 2$.   Using Proposition \ref{general-product-99}, Proposition \ref{laplacian-tr-u-la-k}-$(ii)$ and Lemma \ref{scalar-gradienti-76}, we have
\begin{align*}
\Delta_{SO(N)} p_{_{\boldsymbol{\lambda}}} &=  \sum_{i=1}^r p_{m_1}\dots \widehat{p_{m_i}}\dots p_{m_r} \Delta_{SO(N)}p_{m_i} \\
& +2\sum_{1\leq i<j\leq r}p_{m_1} \dots \widehat{p_{m_i}}\dots \widehat{p_{m_j}}\dots p_{m_r}\,\left<\nabla_{SO(N)}p_{m_i},\nabla_{SO(N)}p_{m_j}\right> \\
& = \sum_{i=1}^r p_{m_1}\dots \widehat{p_{m_i}}\dots p_{m_r} \left(\dfrac{m_i(1+(-1)^{m_i})}{4} p_0+m_i\sum\limits_{l=0}^{\lfloor\frac{m_i-1}{2}\rfloor}p_{m_i-2l}\right.\\
& \left.-\frac{m_i(N+1)}{2}p_{m_i}-\frac{m_i}{2}\sum\limits_{j=1}^{m_i-1} p_{j}p_{m_i-j}\right) \\
& + \sum_{1\leq i<j\leq r}m_im_j p_{m_1} \dots \widehat{p_{m_i}}\dots \widehat{p_{m_j}}\dots p_{m_r}(p_{m_i-m_j}-p_{m_i+m_j}) \\
& = -\frac{k(N+1)}{2}p_{_{\boldsymbol{\lambda}}}+ N\sum_{i=1}^r \dfrac{m_i(1+(-1)^{m_i})}{4} p_{m_1}\dots \widehat{p_{m_i}}\dots p_{m_r} \\
& +
\sum_{i=1}^r m_i p_{m_1}\dots \widehat{p_{m_i}}\dots p_{m_r} \left(\sum\limits_{l=0}^{\lfloor\frac{m_i-1}{2}\rfloor}p_{m_i-2l}-
 \frac{1}{2}\sum\limits_{j=1}^{m_i-1} p_{j}p_{m_i-j}\right) \\
& + \sum_{1\leq i<j\leq r}m_im_j p_{m_1} \dots \widehat{p_{m_i}}\dots \widehat{p_{m_j}}\dots p_{m_r}(p_{m_i-m_j}-p_{m_i+m_j}).
\end{align*}
It is straightforward to see that every polynomial in the above sum belongs to $V_{\leq k}$.

\noindent For the case $\boldsymbol{\lambda}= (1,\dots,1)$, applying Proposition \ref{laplacian-tr-u-la-k}-$(iii)$, we obtain that $ \Delta_{SO(N)}(p_1^k)\in V_{\leq k}$.

In the mixed case, the polynomial  $p_{_{\boldsymbol{\lambda}}}(U):=\tr(U^{m_1})\dots \tr(U^{m_r})(\tr(U))^{s-r}$ can be written $ p_{_{\boldsymbol{\lambda}}}(U)=p_{_{\boldsymbol{\lambda}'}}(U)p_1^{s-r}$, where $p_{_{\boldsymbol{\lambda}'}}(U):=\tr(U^{m_1})\dots \tr(U^{m_r})\in V_{\leq k-s+r}$. Applying the formula for the Laplacian of the product, Proposition \ref{general-product-99}, and Lemma \ref{scalar-product-gradient-234}, we obtain
\begin{align*}
\Delta_{SO(N)} p_{_{\boldsymbol{\lambda}}}  = & (\Delta_{SO(N)} p_{_{\boldsymbol{\lambda}'}})p_1^{s-r}+ p_{_{\boldsymbol{\lambda}'}} \Delta_{SO(N)} p_1^{s-r}+2\left<\nabla_{SO(N)}p_{\boldsymbol{\lambda}'},\nabla_{SO(N)}p_{1}^{s-r}\right> \\
 = & (\Delta_{SO(N)} p_{_{\boldsymbol{\lambda}'}})p_1^{s-r}+ p_{_{\boldsymbol{\lambda}'}} \Delta_{SO(N)} p_1^{s-r} \\
& +(s-r)\sum_{i=1}^r m_ip_1^{s-r-1}p_{m_1}\dots\widehat{p_{m_i}}\dots p_{m_r}(p_{m_i-1}-p_{m_i+1}).
\end{align*}
From the first case,  $\Delta_{SO(N)} p_{_{\boldsymbol{\lambda}'}}\in V_{\leq k-s+r}$ and consequently, $(\Delta_{SO(N)} p_{_{\boldsymbol{\lambda}'}})p_1^{s-r}\in V_{\leq k}$. From Proposition \ref{laplacian-tr-u-la-k}-$(iii)$,  $\Delta_{SO(N)} p_1^{s-r}\in V_{\leq s-r}$  and consequently, $p_{_{\boldsymbol{\lambda}'}} \Delta_{SO(N)} p_1^{s-r}\in  V_{\leq k}$. It is easy to see that the last sum in the above formula also belongs to $V_{\leq k}$.
\end{proof}

We exemplify the formulas from  the above proof for the cases $k\in \{0,1,2,3,4\}$.
\begin{align*}
& k=0: \Delta_{SO(N)} p_{0}=0 \\
& k=1: \Delta_{SO(N)} p_{1}= -\frac{N-1}{2}p_1 \\
& k=2: \\
& \begin{cases}\Delta_{SO(N)} p_{(2,0)}=-(N-1)p_{(2,0)}-p_{(1,1)}+p_0 \\
\Delta_{SO(N)} p_{(1,1)}=-p_{(2,0)}-(N-1)p_{(1,1)}+p_0 \end{cases}\\
& k=3: \\
& \begin{cases}\Delta_{SO(N)} p_{(3,0,0)}=-\frac{3(N-1)}{2} p_{(3,0,0)}-3 p_{(2,1,0)}+3 p_1 \\
 \Delta_{SO(N)} p_{(2,1,0)}=-2 p_{(3,0,0)}-\frac{3(N-1)}{2} p_{(2,1,0)}-p_{(1,1,1)}+(N+2) p_1 \\ 
 \Delta_{SO(N)} p_{(1,1,1)}=-3 p_{(2,1,0)}-\frac{3(N-1)}{2} p_{(1,1,1)}+3 N p_{1}\end{cases} \\
& k=4: \\
& {\begin{cases}\Delta_{SO(N)} p_{(4,0,0,0)}=-2(N-1)p_{(4,0,0,0)}-4p_{(3,1,0,0)}-2p_{(2,2,0,0)}+4p_{(2,0)}+2p_0 \\
\Delta_{SO(N)} p_{(3,1,0,0)}=-3p_{(4,0,0,0)}-2(N-1)p_{(3,1,0,0)}-3p_{(2,1,1,0)}+3p_{(2,0)}+3p_{(1,1)} \\
 \Delta_{SO(N)} p_{(2,2,0,0)}=-4p_{(4,0,0,0)}-2(N-1)p_{(2,2,0,0)}-2p_{(2,1,1,0)}+2Np_{(2,0)}+4p_0 \\
\Delta_{SO(N)} p_{(2,1,1,0)}=-4p_{(3,1,0,0)}-p_{(2,2,0,0)} -2(N-1)p_{(2,1,1,0)}-p_{(1,1,1,1)}+Np_{(2,0)} 
+(N+4)p_{(1,1)}\\
 \Delta_{SO(N)} p_{(1,1,1,1)}=-6p_{(2,1,1,0)}-2(N-1)p_{(1,1,1,1)}+6Np_{(1,1)}.\end{cases}}
\end{align*}

The above theorem also shows that by choosing a basis $\mathcal{B}_{\leq k}$ for the vector space $V_{\leq k}$, the Laplace-Beltrami operator takes an upper block triangular form on the vector space 
\begin{equation}\label{space-V}
\mathcal{V}:=\bigcup\limits_{k=0}^{\infty} V_{\leq k}.
\end{equation}
 We denote 
\begin{equation}\label{forma-upper-block}
\left[\Delta_{SO(N)}\right]_{V_{\leq k}}:=\left[\begin{array}{cccc}
\Delta_{00} & \Delta_{01} & \dots & \Delta_{0k} \\
 \mathbb{O} & \Delta_{11} & \dots & \Delta_{1k} \\
\vdots & \vdots & \ddots & \vdots  \\
 \mathbb{O} & \mathbb{O} & \dots & \Delta_{kk}
\end{array}\right],
\left[\Delta_{SO(N)}\right]_{\mathcal{V}}:=\left[\begin{array}{ccccc}
\Delta_{00} & \Delta_{01} & \dots & \Delta_{0k} & \dots \\
 \mathbb{O} & \Delta_{11} & \dots & \Delta_{1k} & \dots \\
\vdots & \vdots & \ddots & \vdots & \ddots \\
 \mathbb{O} &  \mathbb{O} & \dots & \Delta_{kk} & \dots \\
\vdots & \vdots & \ddots & \vdots & \ddots 
\end{array}\right].
\end{equation}

From the general theory of upper block triangular matrices, see \cite{bernstein}, the set of eigenvalues of the matrix  $\left[\Delta_{SO(N)}\right]_{V_{\leq k}}$ is
$\bigcup_{m=0}^k {\cal E}^{(m)},$
where  ${\cal E}^{(m)}$ is the set of eigenvalues of the diagonal block $\Delta_{mm}$.\\


\section{The case of the orthogonal group $SO(3)$}\label{SO(3)}\label{cazul-SO(3)}

It is well known that the eigenvalues of Laplace-Beltrami operator for compact, connected, simply connected, and simple Lie groups are strongly related to the irreducible representations of these groups. Beautiful explicit formulas for the eigenvalues using representation theory are given in \cite{svirkin}, \cite{svirkin-zubareva}, \cite{svirkin-2010},\cite{zubareva-2016}, \cite{zubareva-2020}, and \cite{cardona}.

In this section, we compute two bases for the vector spaces $V_{\leq k}$, $k\in \mathbb{N}$, we write the corresponding matrices $\left[\Delta_{SO(3)}\right]_{V_{\leq k}}$, and determine the eigenvalues of $\Delta_{SO(3)}$. As a byproduct of having an explicit form for the matrix $\left[\Delta_{SO(3)}\right]_{V_{\leq k}}$, for any natural $k$, we give a formula for the irreducible characters of $SO(3)$ as trace polynomials.

Reminding that $p_0(U)=3$ and $p_1(U)=\tr(U)$, we start by proving the following result.

\begin{prop}
For the vector space  $V_{\leq k}$ we have:
\begin{enumerate}[(i)]
\item the set $\mathcal{B}'_{\leq k}:=\{p_0,p_1, \dots,p_1^k\}$ is a basis;
\item the set $\mathcal{B}''_{\leq k}:=\{p_0, p_1, \dots, p_k\}$ is a basis.
\end{enumerate}
\end{prop}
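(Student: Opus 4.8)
The plan is to reduce everything to single-variable polynomial algebra in $p_1$, exploiting the special spectral structure of $SO(3)$. Every $U\in SO(3)$ has eigenvalues $1,e^{i\theta},e^{-i\theta}$ for some $\theta\in[0,\pi]$, so that
$$p_m(U)=\tr(U^m)=1+2\cos(m\theta),\qquad m\geq 0.$$
Since $\cos(m\theta)=T_m(\cos\theta)$ with $T_m$ the degree-$m$ Chebyshev polynomial of the first kind, and $\cos\theta=(p_1(U)-1)/2$, each $p_m$ is a polynomial of degree exactly $m$ in $p_1$. First I would record this as a lemma: $p_m\in\text{Span}_{\R}\{1,p_1,\dots,p_1^m\}$, with nonzero coefficient on $p_1^m$.

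For part (i), I would prove the two inclusions establishing $V_{\leq k}=\text{Span}_{\R}\{p_0,p_1,\dots,p_1^k\}$. For ``$\subseteq$'': any generator $p_{_{\boldsymbol{\lambda}}}=p_{m_1}\cdots p_{m_s}$ with $\boldsymbol{\lambda}\vdash j\leq k$ is a product of polynomials in $p_1$ of degrees $m_1,\dots,m_s$, hence a polynomial in $p_1$ of degree $m_1+\dots+m_s=j\leq k$; using $1=p_0/3$ this lands in $\text{Span}_{\R}\{p_0,\dots,p_1^k\}$. For ``$\supseteq$'': $p_0\in V_0$, and $p_1^j=p_{(1,\dots,1)}\in V_j\subseteq V_{\leq k}$ for $1\leq j\leq k$. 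Linear independence then follows by evaluation: a relation $c_0p_0+\sum_{j=1}^k c_jp_1^j=0$ on $SO(3)$ becomes the polynomial identity $3c_0+\sum_{j=1}^k c_j t^j=0$ for all values $t=p_1=1+2\cos\theta$, which range over the whole interval $[-1,3]$; vanishing on an infinite set forces every coefficient to be zero. Hence $\mathcal{B}'_{\leq k}$ is a basis and $\dim V_{\leq k}=k+1$.

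For part (ii), since (i) already gives $\dim V_{\leq k}=k+1$ and $\mathcal{B}''_{\leq k}$ consists of $k+1$ elements of $V_{\leq k}$, it suffices to prove their linear independence. The transition from the power basis $\{1,p_1,\dots,p_1^k\}$ (equivalently $\mathcal{B}'_{\leq k}$) to $\{p_0,p_1,p_2,\dots,p_k\}$ is triangular: $p_m$ has degree exactly $m$ in $p_1$ with nonzero leading coefficient, so the matrix expressing $(p_0,p_1,\dots,p_k)$ in the power basis is triangular with nonzero diagonal entries, hence invertible. Therefore $\mathcal{B}''_{\leq k}$ is also a basis.

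The main obstacle, or rather the conceptual crux, is the ``$\subseteq$'' inclusion: that the a priori much larger spanning set $\{p_{_{\boldsymbol{\lambda}}}\}$ collapses onto polynomials in the single function $p_1$ of controlled degree. This is exactly the loss of algebraic independence flagged in the previous section, when the eigenvalues are restricted to the unit circle, and it is special to $SO(3)$: the forced eigenvalue $1$ together with the conjugate pair $e^{\pm i\theta}$ leaves only the one free parameter $\theta$. For higher $SO(N)$ one would have $\lfloor N/2\rfloor$ free angles, and the reduction would not be to a single variable, which is why the clean bases $\mathcal{B}'_{\leq k}$ and $\mathcal{B}''_{\leq k}$ are available only in this low-dimensional case.
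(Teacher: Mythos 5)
Your proof is correct. Part (i) follows the paper's argument essentially verbatim: the reduction $p_m=1+2T_m\bigl(\tfrac{p_1-1}{2}\bigr)$, the absorption of constants into $\tfrac{c}{3}p_0$, and linear independence via a polynomial identity in $t=1+2\cos\theta$ holding on an infinite set. Where you genuinely diverge is part (ii). The paper proves linear independence of $\{p_0,p_1,\dots,p_k\}$ by rewriting the relation as $(3\gamma_0+\gamma_1+\dots+\gamma_k)+2\gamma_1\cos\alpha+\dots+2\gamma_k\cos(k\alpha)=0$ for all $\alpha$ and invoking the linear independence of $\{1,\cos\alpha,\dots,\cos(k\alpha)\}$ --- a Fourier-type argument. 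You instead observe that each $p_m$ has degree exactly $m$ in $p_1$ with nonzero leading coefficient (indeed the leading coefficient is $1$, since $T_m$ has leading coefficient $2^{m-1}$), so the change-of-basis matrix from $\mathcal{B}'_{\leq k}$ to $\mathcal{B}''_{\leq k}$ is triangular with nonzero diagonal. Your route is slightly more economical in that it reuses the degree computation already needed for part (i) and requires no new trigonometric input; the paper's route has the small advantage of being self-contained at the level of the functions themselves and of exhibiting the $p_m$ directly as a "Fourier basis" in $\alpha$, which foreshadows the character computations later in that section. Both are complete proofs.
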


\begin{proof}
We have defined the vector space $V_{\leq k}$ as generated by the polynomials $p_{_{\boldsymbol{\lambda}}}$, more precisely
$V_{\leq k}:= \text{Span}_{\R}\{p_{_{\boldsymbol{\lambda}}}:SO(3)\to \R\,|\,\boldsymbol{\lambda}\vdash j\,\text{and}\,j\leq k\}.$\\
For a matrix $U\in SO(3)$, its eigenvalues are $1$, $\cos\alpha\pm i \sin\alpha$, with $\alpha\in [0,2\pi)$. It follows immediately that $p_1(U)=\tr(U)=2\cos\alpha+1$, leading to $\cos\alpha =\frac{p_1-1}{2}$.\\
For $m\geq 0$, we obtain $$p_m(U)=\tr(U^m)=1+2\cos(m \alpha)=1+2T_m(\cos \alpha)=1+2T_m\left(\frac{p_1-1}{2}\right),$$
where $T_m$ is the $m^{\hbox{{\footnotesize th}}}$ Chebyshev polynomial of first kind.


$(i)$  It follows that $p_m=f_m(p_1)$, with $f_m$ being a polynomial of degree $m$ in the variable $p_1$. If $f_m$ has a constant term $c$, then this term may be written as $\frac{c}{3}p_0$. Consequently, $\mathcal{B}_{\leq k}'$ is a set of generators for the vector space $V_{\leq k}$. 

Next, we prove that the set $\mathcal{B}_{\leq k}'$ is linearly independent. Suppose that there exist $\beta_0,\beta_1,\dots, \beta_k\in \R$ such that $\beta_0p_0+\beta_1p_1+\dots+\beta_kp_1^k=0$. Taking into account the structure of eigenvalues of an orthogonal $3\times 3$ matrix, this is equivalent with $3\beta_0+\beta_1(2x+1)+\dots+\beta_k(2x+1)^k=0$, for all $x=\cos\alpha\in [-1,1]$.
 Making the notation $Y=2x+1$, this is equivalent with the polynomial $3\beta_0+\beta_1 Y+\dots+\beta_k Y^k$ having an infinite number of roots. Therefore, all the coefficients $\beta_0, \dots, \beta_k$ are equal with zero. 
 
 $(ii)$ From $(i)$, we have that the dimension of $V_{\leq k}$ is $k+1$. Consequently, it is enough to prove the linear independence of the set $\mathcal{B}_{\leq k}''$. Suppose that there exist $\gamma_0,\gamma_1,\dots, \gamma_k\in \R$ such that $\gamma_0p_0+\gamma_1p_1+\dots+\gamma_kp_k=0$. This equality is equivalent with $(3\gamma_0+\gamma_1+\dots+\gamma_k)+2\gamma_1\cos(\alpha)+\dots+2\gamma_k\cos(k\alpha)=0$ for all $\alpha\in \R$ and thus $\gamma_0=\gamma_1=\dots=\gamma_k=0$.
\end{proof}

A consequence of the above proposition is that $\dim V_{\leq k}=\dim V_{\leq (k-1)}+1$ and thus the matrix $\left[\Delta_{SO(3)}\right]_{V_{\leq k}}$ written in the bases found above becomes an upper triangular matrix (every block $\Delta_{ij}$ is a scalar). 

Using the obvious relation $p_2=p_1^2-2p_1$  and the formula from Proposition \ref{laplacian-tr-u-la-k}-$(iii)$, we obtain 
$$\Delta_{SO(3)}(p_1^j)=-\dfrac{j(j+1)}{2}p_1^j+j(j-1)p_1^{j-1}+\frac{3}{2}j(j-1)p_1^{j-2},$$
for each $j$ with $2\leq j\leq k$.\\
Equivalently, the matrix of the operator $\Delta_{SO(3)}$ restricted to $V_{\leq k}$ written in the basis $\mathcal{B}_{\leq k}'$ is:
\footnotesize
\begin{equation*}
\left[ \Delta_{SO(3)}\right]_{V_{\leq k}}^{\mathcal{B}_{\leq k}'}:=\left[\begin{array}{ccccccccc}
0 & 0 & 3 & 0 &  \dots & 0 & 0 & 0 & 0 \\
0 & -1 & 2 & 9 &  \dots & 0 & 0 & 0 & 0 \\
0 & 0 & -3 & 6 & \dots & 0 & 0 & 0 & 0 \\
0 & 0 & 0 & -6 & \dots & 0 & 0 & 0 & 0 \\
\vdots & \vdots & \vdots & \vdots &  \ddots & \vdots & \vdots & \vdots & \vdots \\
0 & 0 & 0 & 0 & \dots & -\frac{(k-3)(k-2)}{2} & (k-3)(k-2) & \frac{3(k-2)(k-1)}{2} & 0 \\
0 & 0 & 0 & 0 & \dots & 0 & -\frac{(k-2)(k-1)}{2} & (k-2)(k-1) & \frac{3k(k-1)}{2} \\
0 & 0 & 0 & 0 & \dots & 0 & 0 & -\frac{k(k-1)}{2} &  k(k-1)\\
0 & 0 & 0 & 0 & \dots & 0 & 0 & 0 & -\frac{k(k+1)}{2}
\end{array}\right].
\end{equation*}\normalsize

For the second basis, in order to write $ \left[ \Delta_{SO(3)}\right]_{V_{\leq k}}^{\mathcal{B}_{\leq k}''}$, we need to further elaborate the formula from Proposition \ref{laplacian-tr-u-la-k} -$(ii)$ for the particular case of $SO(3)$.

\begin{lemma}
For the case of $SO(3)$, we have the following formula:
$$\Delta_{SO(3)}p_m=\frac{m(m-1)}{2}p_0-m\sum_{j=1}^{m-1}p_j-\frac{m(m+1)}{2}p_m,\,\,\,m\geq 2.$$
\end{lemma}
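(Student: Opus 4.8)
The plan is to specialize the general formula from Proposition \ref{laplacian-tr-u-la-k}-$(ii)$ to the case $N=3$ and then simplify the resulting sums using the structure of the eigenvalues of an $SO(3)$ matrix. Recall that for $U \in SO(3)$ the eigenvalues are $1, e^{\pm i\alpha}$, so that $p_j(U) = 1 + 2\cos(j\alpha)$ for every $j \geq 0$; in particular $p_0 = 3$. The key identities I expect to need are the products $p_i p_j$ rewritten in terms of single power sums, coming from the trigonometric product-to-sum formula $(1+2\cos(i\alpha))(1+2\cos(j\alpha)) = 1 + 2\cos(i\alpha) + 2\cos(j\alpha) + 2\cos((i+j)\alpha) + 2\cos((i-j)\alpha)$, which translates directly into $p_i p_j = p_{i+j} + p_{|i-j|} + p_i + p_j - p_0$ on $SO(3)$.

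First I would substitute $N=3$ into the formula of Proposition \ref{laplacian-tr-u-la-k}-$(ii)$, giving
\begin{equation*}
\Delta_{SO(3)}p_m = \frac{m(1+(-1)^m)}{4}p_0 + m\sum_{i=0}^{\lfloor\frac{m-1}{2}\rfloor}p_{m-2i} - 2m\,p_m - \frac{m}{2}\sum_{j=1}^{m-1}p_j p_{m-j}.
\end{equation*}
Then I would treat the quadratic term: applying the $SO(3)$ product identity to each factor $p_j p_{m-j}$ yields $p_m + p_{|2j-m|} + p_j + p_{m-j} - p_0$. Summing over $j$ from $1$ to $m-1$, the term $p_m$ contributes $(m-1)p_m$, the constant contributes $-(m-1)p_0$, and the two linear pieces $p_j + p_{m-j}$ together contribute $2\sum_{j=1}^{m-1}p_j$. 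The remaining awkward piece is $\sum_{j=1}^{m-1}p_{|2j-m|}$, which I would evaluate by carefully tracking the parity of $m$: as $j$ ranges over $1,\dots,m-1$ the value $|2j-m|$ runs through the even (resp. odd) numbers up to $m$ when $m$ is even (resp. odd), each value appearing twice except for the middle term, which matches exactly twice the single-index sum $\sum_{i=0}^{\lfloor(m-1)/2\rfloor}p_{m-2i}$ appearing in the first line.

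The main obstacle, and the step deserving the most care, will be this bookkeeping of $\sum_{j=1}^{m-1}p_{|2j-m|}$ and checking that it precisely cancels (up to the factor $m/2 \cdot 2 = m$) the sum $m\sum_{i=0}^{\lfloor(m-1)/2\rfloor}p_{m-2i}$ in the original formula, while the leftover constant terms reconcile with $\frac{m(1+(-1)^m)}{4}p_0$. Handling the even and odd cases separately for the midpoint term is where an off-by-one or a mishandled central summand could slip in. Once these cancellations are verified, collecting the surviving terms gives $\frac{m(m-1)}{2}p_0 - m\sum_{j=1}^{m-1}p_j - \frac{m(m+1)}{2}p_m$, which is the claimed identity. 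As a consistency check I would verify the formula against the explicit $k=2,3,4$ computations already displayed after Theorem \ref{flag-12} (specialized to $N=3$), for instance confirming $\Delta_{SO(3)}p_2 = p_0 - 2p_1 - 3p_2$ and $\Delta_{SO(3)}p_3 = 3p_0 - 3p_1 - 3p_2 - 6p_3$.
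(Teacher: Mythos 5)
Your proposal is correct and follows essentially the same route as the paper: the paper also derives the $SO(3)$ product identity $p_rp_s=p_r+p_s+p_{r+s}+p_{|r-s|}-p_0$ from the eigenvalue structure and substitutes it into Proposition \ref{laplacian-tr-u-la-k}-$(ii)$, carrying out exactly the cancellation of $\sum_{j=1}^{m-1}p_{|m-2j|}$ against the single-index sum that you describe (the paper writes out the odd case and notes the even case is analogous). Your consistency checks for $m=2,3$ agree with the displayed $k=2,3$ formulas specialized to $N=3$.
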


\begin{proof}
In the case of $SO(3)$, for $r,s\in \mathbb{N}$, we have
\begin{align*}
p_rp_s & = (1+2\cos(r\alpha))(1+2\cos(s\alpha)) \\
& = 1+2\cos(r\alpha)+2\cos(s\alpha)+2\cos((r+s)\alpha)+2\cos(|r-s|\alpha) \\
& = 1+(p_r-1)+(p_s-1)+(p_{r+s}-1)+(p_{|r-s|}-1) \\
& = p_r+p_s+p_{r+s}+p_{|r-s|}-p_0.
\end{align*}
Applying Proposition \ref{laplacian-tr-u-la-k}-$(ii)$ and the above computation, for the case $m$ odd,  we obtain:
\begin{align*}
\Delta_{SO(3)}p_m  = & m(p_{m-2}+p_{m-4}+\dots+p_1)-mp_m-\frac{m}{2}\sum_{j=1}^{m-1}p_jp_{m-j} \\
 = & m\left(p_{m-2}+p_{m-4}+\dots+p_1-p_m-\frac{1}{2}\sum_{j=1}^{m-1}\left(p_j+p_{m-j}+p_m+p_{|m-2j|}-p_0\right)\right) \\
 = & m\left((p_{m-2}+p_{m-4}+\dots+p_1)-\frac{m+1}{2}p_m-\sum_{j=1}^{m-1}p_j\right. \\
&  \left.+\frac{m-1}{2}p_0-(p_{m-2}+p_{m-4}+\dots+p_1)\right) \\
= & \frac{m(m-1)}{2}p_0-m\sum_{j=1}^{m-1}p_j-\frac{(m+1)m}{2}p_m.
\end{align*}
For the case $m$ even, by an analogous computation we obtain the announced result.
\end{proof}
The matrix of the operator $\Delta_{SO(3)}$ restricted to $V_{\leq k}$ written in the basis $\mathcal{B}_{\leq k}''$ is:
\begin{equation*}
{
\left[ \Delta_{SO(3)}\right]_{V_{\leq k}}^{\mathcal{B}_{\leq k}''}:=}\left[\begin{array}{ccccccccc}
0 & 0 & 1 & 3 &  \dots &\frac{(k-4)(k-3)}{2}  & \frac{(k-3)(k-2)}{2} & \frac{(k-2)(k-1)}{2} & \frac{k(k-1)}{2} \\
0 & -1 & -2 & -3 &  \dots & -(k-3) & -(k-2) & -(k-1) & -k \\
0 & 0 & -3 & -3 & \dots & -(k-3) & -(k-2) & -(k-1) & -k \\
0 & 0 & 0 & -6 & \dots & -(k-3) & -(k-2) & -(k-1) & -k \\
\vdots & \vdots & \vdots & \vdots &  \ddots & \vdots & \vdots & \vdots & \vdots \\
0 & 0 & 0 & 0 & \dots & -\frac{(k-3)(k-2)}{2} & -(k-2) & -(k-1) & -k \\
0 & 0 & 0 & 0 & \dots & 0 & -\frac{(k-2)(k-1)}{2} & -(k-1) & -k \\
0 & 0 & 0 & 0 & \dots & 0 & 0 & -\frac{k(k-1)}{2} & -k \\
0 & 0 & 0 & 0 & \dots & 0 & 0 & 0 & -\frac{k(k+1)}{2}
\end{array}\right].
\end{equation*}

As a consequence, analyzing both the matrices $\left[ \Delta_{SO(3)}\right]_{V_{\leq k}}^{\mathcal{B}_{\leq k}'}$ and $\left[ \Delta_{SO(3)}\right]_{V_{\leq k}}^{\mathcal{B}_{\leq k}''}$, we obtain that the numbers $\lambda_k=-\frac{k(k+1)}{2}$ are eigenvalues of the Laplace-Beltrami operator $\Delta_{SO(3)}$, for all $k\in \mathbb{N}$. These numbers constitute the entire spectrum of $\Delta_{SO(3)}$, as proved in \cite{svirkin-2010}; more precisely, for $\nu =2k+1$ and $\gamma=1$ the formula (10) in  \cite{svirkin-2010} gives $\lambda_k$. 

It has been proved in \cite{svirkin} that the characters of the  irreducible representations of a connected compact Lie group are eigenfunctions for the Laplace-Beltrami operator defined with respect to a biinvariant metric. In the following, we give a different proof for the case of $SO(3)$, where also we express the characters in the bases $\mathcal{B}_{\leq k}'$ and $\mathcal{B}_{\leq k}''$.

\begin{thm}
For any natural $k$, we have the following:
\begin{enumerate}[(i)]
\item The function $\chi_k:SO(3)\to \R$ defined by $$\chi_k:=-\frac{k-1}{3}p_0+p_1+p_2+\dots+p_k$$ is an eigenfunction for $\Delta_{SO(3)}$ corresponding to the eigenvalue $-\frac{k(k+1)}{2}$. 
\item The function $\chi_k$ has the equivalent expression 
$$\chi_k=\sum_{j=0}^k \left(\sum_{l=j}^k(-1)^{k-l}C_{k+l}^{2l}C_{l}^j\right)p_1^j.$$
\item The functions $\chi_k$ are the irreducible characters for the $SO(3)$ representations. 
\end{enumerate}
\end{thm}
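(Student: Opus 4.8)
I would establish the eigenvalue relation by a direct linear computation. Writing $\chi_k=-\frac{k-1}{3}p_0+\sum_{m=1}^{k}p_m$ and applying $\Delta_{SO(3)}$ term by term, I use $\Delta_{SO(3)}p_0=0$, $\Delta_{SO(3)}p_1=-p_1$, and the formula $\Delta_{SO(3)}p_m=\frac{m(m-1)}{2}p_0-m\sum_{i=1}^{m-1}p_i-\frac{m(m+1)}{2}p_m$ from the preceding Lemma. Collecting, in the basis $\mathcal{B}''_{\leq k}$, the coefficient of each $p_j$ with $1\leq j\leq k$ receives the diagonal term $-\frac{j(j+1)}{2}$ together with the contributions $-m$ from every $p_m$ with $m>j$; telescoping $\sum_{m=j+1}^{k}m=\frac{k(k+1)}{2}-\frac{j(j+1)}{2}$ collapses this to $-\frac{k(k+1)}{2}$. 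The coefficient of $p_0$ sums to $\frac{(k-1)k(k+1)}{6}=-\frac{k(k+1)}{2}\cdot\left(-\frac{k-1}{3}\right)$. Thus every coefficient is $-\frac{k(k+1)}{2}$ times the corresponding coefficient of $\chi_k$, giving $\Delta_{SO(3)}\chi_k=-\frac{k(k+1)}{2}\chi_k$. The only care needed is to handle the $m=1$ summand separately and to note that its diagonal value $-1$ equals $-\frac{1\cdot 2}{2}$.

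\textbf{Part (ii).} The plan is a three-term recurrence together with a uniqueness argument. From the product identity $p_rp_s=p_r+p_s+p_{r+s}+p_{|r-s|}-p_0$ on $SO(3)$, derived in the proof of the Lemma, expanding $p_1\chi_k$ yields the Clebsch--Gordan type relation $p_1\chi_k=\chi_{k-1}+\chi_k+\chi_{k+1}$, equivalently $\chi_{k+1}=(p_1-1)\chi_k-\chi_{k-1}$, with $\chi_0=1$ and $\chi_1=p_1$. It then suffices to verify that the polynomials $\widetilde{\chi}_k:=\sum_{j=0}^{k}\left(\sum_{l=j}^{k}(-1)^{k-l}C_{k+l}^{2l}C_{l}^{j}\right)p_1^j$ satisfy the same recurrence and initial data; since a linear three-term recurrence has a unique solution for given initial conditions, this forces $\widetilde{\chi}_k=\chi_k$. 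Matching the coefficient of $p_1^j$ in the identity $\widetilde{\chi}_{k+1}+\widetilde{\chi}_{k-1}=(p_1-1)\widetilde{\chi}_k$ reduces everything to a Chu--Vandermonde type summation among the numbers $\sum_{l}(-1)^{k-l}C_{k+l}^{2l}C_l^j$. Equivalently, one may substitute $p_m=1+2T_m\left(\frac{p_1-1}{2}\right)$ to obtain the closed form $\chi_k=1+2\sum_{m=1}^{k}T_m\left(\frac{p_1-1}{2}\right)$ and read the coefficients directly off the explicit power series of the Chebyshev polynomials, in which the $C_{k+l}^{2l}$ appear naturally after summation over $m$.

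\textbf{Part (iii).} Here the plan is to match $\chi_k$ with a known character. Evaluating on $U\in SO(3)$ with rotation angle $\alpha$, whose eigenvalues are $1$ and $e^{\pm i\alpha}$, gives $p_m(U)=1+2\cos(m\alpha)$, hence $\chi_k=1+2\sum_{m=1}^{k}\cos(m\alpha)=\sum_{m=-k}^{k}e^{im\alpha}$. This is precisely the value, on a maximal torus element, of the character of the $(2k+1)$-dimensional irreducible representation of $SO(3)$. Invoking the classical classification of the irreducible representations of $SO(3)$ --- indexed by $k\in\mathbb{N}$, of dimension $2k+1$, arising as the odd-dimensional irreducible representations of $SU(2)$ that descend through the covering $SU(2)\to SO(3)$ --- the family $\{\chi_k\}_{k\geq 0}$ is the complete list of irreducible characters. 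Together with part $(i)$, this reproves, for the group $SO(3)$, that the irreducible characters are eigenfunctions of $\Delta_{SO(3)}$.

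The substance of the argument lies entirely in part $(ii)$: parts $(i)$ and $(iii)$ are linear bookkeeping and a classical identification, whereas the closed coefficient formula demands a genuine binomial manipulation. I expect the recurrence reformulation to be the decisive simplification, reducing the whole of $(ii)$ to a single Chu--Vandermonde style identity, which is where the care will be required.
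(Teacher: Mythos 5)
Your proposal is correct. Parts (i) and (iii) coincide in substance with the paper's argument: your term-by-term collection of coefficients in the basis $\mathcal{B}_{\leq k}''$ is exactly the paper's computation of the matrix--vector product $\left[\Delta_{SO(3)}\right]_{V_{\leq k}}^{\mathcal{B}_{\leq k}''}v_k$ with $v_k=\left(-\frac{k-1}{3},1,\dots,1\right)^t$, and the identification $\chi_k=1+2\sum_{m=1}^k\cos(m\alpha)=\sin\left(\left(k+\tfrac12\right)\alpha\right)/\sin\frac{\alpha}{2}$ is the paper's part (iii) verbatim. Part (ii) is where you genuinely diverge. The paper recognizes $\chi_k=U_{2k}\left(\cos\frac{\alpha}{2}\right)$ and obtains the coefficients in one step from the known even-power expansion of $U_{2k}$ together with $\cos^2\frac{\alpha}{2}=\frac{p_1+1}{4}$, passing through the intermediate form $\chi_k=\sum_{l=0}^k(-1)^{k-l}C_{k+l}^{2l}(p_1+1)^l$ and then expanding binomially. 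You instead derive the Clebsch--Gordan recurrence $\chi_{k+1}=(p_1-1)\chi_k-\chi_{k-1}$ (which is correct, and a nice byproduct of the product identity $p_rp_s=p_r+p_s+p_{r+s}+p_{|r-s|}-p_0$ that the paper only uses inside the Lemma) and appeal to uniqueness of solutions of a three-term recurrence. This buys independence from the half-angle Chebyshev identification, at the price of having to verify that the closed-form coefficients satisfy the recurrence; that verification is the entire content of (ii) on your route, and you leave it as ``a Chu--Vandermonde type summation.'' It does go through, but it is much less painful if you check the recurrence at the level of the intermediate expansion in powers of $w:=p_1+1$, where $\chi_{k+1}=(w-2)\chi_k-\chi_{k-1}$ reduces to the single identity $C_{k+l+1}^{2l}=C_{k+l-1}^{2l-2}+2C_{k+l}^{2l}-C_{k+l-1}^{2l}$ (iterated Pascal), rather than on the fully expanded coefficients $\sum_{l=j}^k(-1)^{k-l}C_{k+l}^{2l}C_{l}^{j}$. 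If you want a self-contained proof you should supply that identity; otherwise your alternative remark via $p_m=1+2T_m\left(\frac{p_1-1}{2}\right)$ is essentially the paper's route.
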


\begin{proof} The statement $(i)$ is equivalent with proving that $v_k:=\left[\begin{array}{ccccc} -\frac{k-1}{3}&1&1&\dots&1\end{array}\right]^t$ is an eigenvector for $\left[ \Delta_{SO(3)}\right]_{V_{\leq k}}^{\mathcal{B}_{\leq k}''}$ corresponding to the eigenvalue $-\frac{k(k+1)}{2}$.

Indeed, the product of the first line of the matrix $\left[ \Delta_{SO(3)}\right]_{V_{\leq k}}^{\mathcal{B}_{\leq k}''}$ with the vector $v_k$ gives
$$\sum_{j=0}^{k-2}\frac{(k-j)(k-j-1)}{2}=\frac{k(k-1)(k+1)}{6}=-\frac{k(k+1)}{2}\cdot \left(-\frac{k-1}{3}\right)$$
and for  all $k\geq j\geq 1$ the product of the $(j+1)^{\text{th}}$ line of the matrix $\left[ \Delta_{SO(3)}\right]_{V_{\leq k}}^{\mathcal{B}_{\leq k}''}$ with the vector $v_k$ gives 
$$-\frac{j(j-1)}{2}-\sum_{l=0}^{k-j}(k-l)=-\frac{k(k+1)}{2}.$$

For $(iii)$, we notice that
$$\chi_k=1+2\sum_{m=1}^k\cos(m\alpha)=\frac{\sin\left(\left(k+\frac{1}{2}\right)\alpha\right)}{\sin\frac{\alpha}{2}},$$
which are the irreducible characters for the representations of $SO(3)$, see \cite{vilenkin}.

The statement $(ii)$ represents the expression of the eigenvector $\chi_k$ written in the basis $\mathcal{B}_k'$. In order to do this, we first notice that
$$\chi_k=U_{2k}\left(\cos \frac{\alpha}{2}\right),$$
where $U_{2k}$ are the even-indexed Chebyshev polynomials of the second kind, which are expressed only in even powers of the variable. More precisely, it is known that
$$U_{2k}(t)=\sum_{s=0}^k(-1)^sC_{2k-s}^s~4^{k-s}(t^2)^{k-s}.$$
Noticing that $\cos^2\frac{\alpha}{2}=\frac{1+\cos \alpha}{2}=\frac{p_1+1}{4}$ and by a change of the summation index, we obtain
$$\chi_k=\sum_{l=0}^k(-1)^{k-l}C_{k+l}^{2l}\cdot(p_1+1)^l.$$
Taking into account the coefficient of $p_1^j$, $0\leq j\leq k$, in the binomial expansion of $(p_1+1)^l$, we obtain the result from $(ii)$.
\end{proof}

\section{The case of the orthogonal group $SO(4)$}\label{SO(4)-123}

Since $SO(4)$ is the only non-simple group in the family $SO(N)$, the study of $\text{spec}\Delta_{SO(4)}$ is a bit more delicate. 
In the first subsection,  we determine explicitly the spectrum of $\Delta_{SO(4)}$ by using the Riemannian submersion technique. 
In the second subsection, we construct a basis for $V_{\leq k}$, for all natural $k$. As a consequence, we give an explicit form of the matrix $\left[\Delta_{SO(4)}\right]_{V_{\leq 4}}$. We also give a formula for the irreducible characters of $SO(4)$ up to order four, similar with the formula \eqref{caractere-SO(3)-453} for the case $SO(3)$.

\subsection{Eigenvalues of Laplace-Beltrami operator on $SO(4)$}

In order to establish the set of eigenvalues of Laplace-Beltrami operator on $SO(4)$, we need some preliminary results. The following map 
$\Phi:S^3\to SU(2)$, 
$$\Phi(a,b,c,d)=\begin{pmatrix} a+ i b & c+i d \\ -c +i d  & a- i b \end{pmatrix}$$
is a Lie group isomorphism. We search for a Riemannian metric on $S^3$ of the form $\alpha {\bf g}_{_{\text{Can}}}$, $\alpha>0$, so that $\Phi $ becomes an isometry, where on $SU(2)$ we consider the metric ${\bf g}_{_{\text{Frob}}}$. For this it is sufficient to study the differential $d\Phi(1,0,0,0): T_{(1,0,0,0)}S^3\to su(2)$, 
$$d\Phi(1,0,0,0)\cdot (0,v_2,v_3,v_4)=\begin{pmatrix} i v_2 & v_3+i v_4 \\ -v_3 +i v_4  & - i v_2 \end{pmatrix}.$$
\begin{align*}
{\bf g}_{_{\text{Frob}}}&(d\Phi(1,0,0,0)\cdot (0,v_2,v_3,v_4), d\Phi(1,0,0,0)\cdot (0,w_2,w_3,w_4)) \\
& = 
\tr((d\Phi(1,0,0,0)\cdot (0,v_2,v_3,v_4))^{\dagger}d\Phi(1,0,0,0)\cdot (0,w_2,w_3,w_4) \\
& =
2(v_2w_2+v_3w_3+v_4w_4) \\
& =2{\bf g}_{_{\text{Can}}}((0,v_2,v_3,v_4),(0,w_2,w_3,w_4)).
\end{align*}
The above equality proves that $\Phi:(S^3,2{\bf g}_{_{\text{Can}}})\to (SU(2),{\bf g}_{_{\text{Frob}}})$ is an isometry.
From \cite{svirkin}, we have the operator equality 
$$\Delta_{(M,k{\bf g})}=\frac{1}{k}\Delta_{(M,{\bf g})},$$  
where $(M,{\bf g})$ is a Riemannian manifold and $k>0$ is a positive scalar.
Consequently, 
$$(\Delta_{SU(2)}f)\circ \Phi=\Delta_{(S^3,2{\bf g}_{_{\text{Can}}})}(f\circ \Phi)=\frac{1}{2}\Delta_{S^3}(f\circ \Phi), 
\,\,f\in \mathcal{C}^{\infty}(SU(2)).$$
As it is shown in \cite{axler}, the eigenvalues of $\Delta_{S^3}$ are $\{-k(k+2)\,|\,k\in \mathbb{N}\}$ and thus, 
$$\text{spec}\Delta_{SU(2)}=\left\{-\frac{1}{2}k(k+2)\,|\,k\in \mathbb{N}\right\}.$$ 

It is well known that the eigenvalues of the Laplace-Beltrami operator on a compact, connected Lie group are strongly related to its irreducible representations, see for example \cite{vilenkin} or other classical texts.

In \cite{svirkin}, pp. 240, it is proved that in order to compute $\text{spec}\Delta_{G}$, where $G$ is a compact, connected Lie group, it is sufficient to apply the Laplace-Beltrami operator on $G$ to every irreducible character of $G$. In other words, the set of eigenvalues (without multiplicity) of $\Delta_{G}$ is in one to one correspondence with the set of irreducible characters of $G$. In the light of this important result, we will denote with $\chi_{\frac{k}{2}}$, $k\in \mathbb{N}$, the irreducible character of $SU(2)$ corresponding to the eigenvalue $-\frac{1}{2}k(k+2)=-2\left[\frac{k}{2}\left(\frac{k}{2}+1\right)\right]$. Making the notation $j:=\frac{k}{2}$, we can rewrite the irreducible character as $\chi_j$ with the corresponding eigenvalue $-2j(j+1)$, $j\in \frac{1}{2}\mathbb{N}$, which is the standard notation in representation theory.

There is a tight relation between $SU(2)\times SU(2)$ and $SO(4)$. More explicitly, see \cite{fujii}, \cite{makhlin},  let $F:SU(2)\times SU(2)\to SO(4)$
$$F(X,Y)=R^{\dagger}(X\otimes Y)R,\,\,\text{where}\,\,
R=\frac{1}{\sqrt{2}}\begin{pmatrix} 1 & 0 & 0 & -i \\ 0 & -i & -1 & 0 \\ 0 & -i & 1 & 0 \\ 1 & 0 & 0 & i
\end{pmatrix},$$
is a Lie group morphism. 
This time, we search for a Riemannian metric on $SU(2)\times SU(2)$ of the form $a ({\bf g}_{_{\text{Frob}}}\times  {\bf g}_{_{\text{Frob}}})$, $a>0$, so that $F $ becomes a Riemannian submersion, where on $SO(4)$ we consider the metric ${\bf g}_{_{\text{Frob}}}$. As before, it is sufficient to study the differential $dF(\mathbb{I}_2,\mathbb{I}_2): su(2)\times su(2)\to so(4)$, which is given by 
$$dF(\mathbb{I}_2,\mathbb{I}_2)\cdot (\boldsymbol{\xi}_1, \boldsymbol{\xi}_2)=R^{\dagger}(\boldsymbol{\xi}_1\otimes \mathbb{I}_2+\mathbb{I}_2\otimes \boldsymbol{\xi}_2)R.$$
If $\boldsymbol{\xi}_j=\begin{pmatrix} i\alpha_j & -\overline{z}_j \\ z_j & -i \alpha_j\end{pmatrix}$ and $\boldsymbol{\eta}_j=\begin{pmatrix} i\beta_j & -\overline{w}_j \\ w_j & -i \beta_j\end{pmatrix}$ are matrices in $su(2)$, $\alpha_j,\beta_j\in \R$, $z_j, w_j\in \mathbb{C}$, and $j\in{1,2}$, then 
\begin{align*}
{\bf g}_{_{\text{Frob}}}&\left(dF(\mathbb{I}_2,\mathbb{I}_2)\cdot (\boldsymbol{\xi}_1, \boldsymbol{\xi}_2), dF(\mathbb{I}_2,\mathbb{I}_2)\cdot (\boldsymbol{\eta}_1, \boldsymbol{\eta}_2)\right) \\
& = 
\tr\left((dF(\mathbb{I}_2,\mathbb{I}_2)\cdot (\boldsymbol{\xi}_1, \boldsymbol{\xi}_2))^{\dagger}dF(\mathbb{I}_2,\mathbb{I}_2)\cdot (\boldsymbol{\eta}_1, \boldsymbol{\eta}_2)\right) \\
& =
4\alpha_1\beta_1+4\alpha_2\beta_2+2\overline{z}_1w_1+2z_1\overline{w}_1+2\overline{z}_2w_2+2z_2\overline{w}_2 \\
& =2(\tr(\boldsymbol{\xi}_1^{\dagger}\boldsymbol{\eta}_1)+\tr(\boldsymbol{\xi}_2^{\dagger}\boldsymbol{\eta}_2)) \\
& =2({\bf g}_{_{\text{Frob}}}\times {\bf g}_{_{\text{Frob}}})((\boldsymbol{\xi}_1, \boldsymbol{\xi}_2),(\boldsymbol{\eta}_1, \boldsymbol{\eta}_2)).
\end{align*}
The above equality proves that $F:(SU(2)\times SU(2), 2({\bf g}_{_{\text{Frob}}}\times {\bf g}_{_{\text{Frob}}})\to (SO(4), {\bf g}_{_{\text{Frob}}})$ is a Riemannian submersion. And so, see \cite{svirkin}, for every $f\in \mathcal{C}^{\infty}(SO(4))$, we have 
$$(\Delta_{SO(4)}f)\circ F=\Delta_{\left(SU(2)\times SU(2), 2({\bf g}_{_{\text{Frob}}}\times {\bf g}_{_{\text{Frob}}})\right)}(f\circ F)=\frac{1}{2}\Delta_{SU(2)\times SU(2)}(f\circ F).$$

The irreducible representations of $SU(2)\times SU(2)$ are given by the tensor product of the irreducible representations of $SU(2)$. The irreducible characters are the products $\chi_{j_1}\chi_{j_2}$, $j_1, j_2\in \frac{1}{2}\mathbb{N}$. Their corresponding eigenvalues are the result of the following computation. For any $(X,Y)\in SU(2)\times SU(2)$, we have 
\begin{align*}
 \Delta_{SU(2)\times SU(2)}(\chi_{j_1}(X)\chi_{j_2}(Y)) & =(\Delta_{_{SU(2)}}\chi_{j_1}(X))\chi_{j_2}(Y)+\chi_{j_1}(X)(\Delta_{_{SU(2)}}\chi_{j_2}(Y)) \\
& =- \left(2j_1(j_1+1)+2j_2(j_2+1)\right)\chi_{j_1}(X)\chi_{j_2}(Y).
\end{align*}

The irreducible representations of $SO(4)$ are well known and studied for a long time, see \cite{kosmann}. They are obtained from the irreducible representations of $SU(2)$.  More precisely, the irreducible representations of $SO(4)$ are $D^{(j_1j_2)}:=D^{j_1}\otimes D^{j_2}$ that can be factorized through the map $F$, where $D^{j}$, $j\in \frac{1}{2}\mathbb{N}$, are the irreducible representations of $SU(2)$. The equivalent condition for $D^{(j_1j_2)}$ to factorize through the map $F$ is 
\begin{equation}\label{conditie-j}
j_1, j_2\in \frac{1}{2}\mathbb{N}\,\,\text{and}\,\, j_1+j_2\in \mathbb{N}.
\end{equation}
The above condition is equivalent with the following 
\begin{equation}\label{conditie-k}
j_1=\frac{k_1}{2}, j_2=\frac{k_2}{2}, k_1, k_2\in \mathbb{N}\,\,\text{and}\,\, k_1,\,k_2\,\,\text{have the same parity}.
\end{equation}


\begin{thm}\label{spectru-SO4-789}
Let the special orthogonal group $(SO(4), {\bf g}_{_{\emph{Frob}}})$ endowed with the induced Frobenius metric from $\mathcal{M}_{4\times 4}(\R)$ and $\Delta_{SO(4)}$ be its associated Laplace-Beltrami operator.   
The set of eigenvalues of $\Delta_{SO(4)}$ is given by
$$\emph{spec}\Delta_{SO(4)}=\left\{-\frac{1}{4}(k_1(k_1+2)+k_2(k_2+2))\,|\,k_1,k_2\in \mathbb{N}\,\,\emph{and}\,\,k_1,k_2\,\,\emph{with}\,\,\emph{\bf same parity}\right\}.$$
\end{thm}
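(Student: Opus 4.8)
The plan is to combine the representation-theoretic description of the spectrum recalled above from \cite{svirkin} with the Riemannian submersion $F$ and the eigenvalue computation on $SU(2)\times SU(2)$ already carried out in this subsection. By that result, $\text{spec}\Delta_{SO(4)}$ is in one-to-one correspondence with the irreducible characters of $SO(4)$; hence it suffices to apply $\Delta_{SO(4)}$ to each such character and read off the eigenvalue. The whole argument thus reduces to evaluating $\Delta_{SO(4)}\chi$ for every irreducible character $\chi$ of $SO(4)$, and to invoking completeness to rule out anything else.

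First I would fix an admissible pair $(j_1,j_2)$, i.e. one satisfying \eqref{conditie-j}, let $D^{(j_1j_2)}$ be the corresponding irreducible representation of $SO(4)$, and denote its character by $\chi_{(j_1j_2)}$. The structural point is that, because $D^{(j_1j_2)}$ factors through $F$, its character pulls back along $F$ to the product character on $SU(2)\times SU(2)$,
$$\chi_{(j_1j_2)}\circ F=\chi_{j_1}\,\chi_{j_2}.$$
I would then substitute $f=\chi_{(j_1j_2)}$ into the submersion identity $(\Delta_{SO(4)}f)\circ F=\tfrac12\Delta_{SU(2)\times SU(2)}(f\circ F)$ established above and insert the already-computed value $\Delta_{SU(2)\times SU(2)}(\chi_{j_1}\chi_{j_2})=-(2j_1(j_1+1)+2j_2(j_2+1))\chi_{j_1}\chi_{j_2}$. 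Chaining these yields
$$(\Delta_{SO(4)}\chi_{(j_1j_2)})\circ F=-\bigl(j_1(j_1+1)+j_2(j_2+1)\bigr)\,(\chi_{(j_1j_2)}\circ F).$$
Since $F$ is surjective (being the standard double cover $Spin(4)\to SO(4)$), two functions on $SO(4)$ that agree after precomposition with $F$ must coincide; hence $\Delta_{SO(4)}\chi_{(j_1j_2)}=-(j_1(j_1+1)+j_2(j_2+1))\chi_{(j_1j_2)}$. Passing to the parametrization \eqref{conditie-k}, $j_1=k_1/2$, $j_2=k_2/2$ with $k_1,k_2$ of the same parity, this eigenvalue becomes $-\tfrac14(k_1(k_1+2)+k_2(k_2+2))$, exactly the announced value. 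Letting $(j_1,j_2)$ range over all admissible pairs and collecting the resulting eigenvalues, the \cite{svirkin} completeness result guarantees these exhaust $\text{spec}\Delta_{SO(4)}$, so both inclusions are obtained at once: the computation shows each listed value is an eigenvalue, while completeness shows there are no others.

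The genuine content, as opposed to the routine algebra, sits in the identification of the irreducible characters of $SO(4)$ with the admissible pairs $(j_1,j_2)$. The representation $D^{j_1}\otimes D^{j_2}$ descends from $Spin(4)=SU(2)\times SU(2)$ to $SO(4)$ exactly when it is trivial on $\ker F=\{(\mathbb{I}_2,\mathbb{I}_2),(-\mathbb{I}_2,-\mathbb{I}_2)\}$, i.e. when $(-1)^{2(j_1+j_2)}=1$, which is the condition $j_1+j_2\in\mathbb{N}$ and equivalently the same-parity condition \eqref{conditie-k}. This is precisely where the non-simplicity of $SO(4)$ and the distinction between $SO(4)$ and its cover enter, and it is the step I expect to require the most care; by contrast, the pullback identity $\chi_{(j_1j_2)}\circ F=\chi_{j_1}\chi_{j_2}$ and the surjectivity of $F$ used to transfer the eigenfunction equation downward are standard, but should be stated explicitly.
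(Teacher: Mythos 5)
Your proposal is correct and follows essentially the same route as the paper: invoke the Berestovski\u{\i}--Svirkin result that the spectrum is computed by applying $\Delta_{SO(4)}$ to the irreducible characters, pull each character back along the Riemannian submersion $F:SU(2)\times SU(2)\to SO(4)$ to the product $\chi_{j_1}\chi_{j_2}$, and read off the eigenvalue $-\frac{1}{4}(k_1(k_1+2)+k_2(k_2+2))$ under the same-parity condition \eqref{conditie-k}. The only (harmless) presentational difference is that you derive the eigenfunction equation on $SO(4)$ from the one upstairs via surjectivity of $F$, and you spell out the kernel condition $(-1)^{2(j_1+j_2)}=1$ explicitly, whereas the paper takes the eigenvalue equation downstairs as given by \cite{svirkin} and solves for $\lambda_{j_1,j_2}$.
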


\begin{proof}
As proved in \cite{svirkin}, in order to compute $\text{spec}\Delta_{SO(4)}$ it is sufficient to apply the Laplace-Beltrami operator on $SO(4)$ to every irreducible character of $SO(4)$.

 The characters of irreducible representations of $SO(4)$ are $\chi_{j_1,j_2}$, where $(\chi_{j_1,j_2}\circ F)(X,Y)=\chi_{j_1}(X)\chi_{j_2}(Y)$ and $j_1,j_2$ satisfy \eqref{conditie-j}, or equivalently \eqref{conditie-k}.

According to \cite{svirkin}, to each character $\chi_{j_1,j_2}$ it corresponds a unique eigenvalue $\lambda_{j_1,j_2}$ of $\Delta_{SO(4)}$. From
$\lambda_{j_1,j_2}(\chi_{j_1,j_2})  =\Delta_{SO(4)}\chi_{j_1,j_2}$ composing both sides with $F$, we obtain
\begin{align*}
\lambda_{j_1,j_2}(\chi_{j_1,j_2}\circ F) & =(\Delta_{SO(4)}\chi_{j_1,j_2})\circ F=\frac{1}{2}\Delta_{SU(2)\times SU(2)}(\chi_{j_1,j_2}\circ F)\\
& = \frac{1}{2}\Delta_{SU(2)\times SU(2)}(\chi_{j_1}\chi_{j_2}) 
 =- \frac{1}{2}\left(2j_1(j_1+1)+2j_2(j_2+1)\right)\chi_{j_1}\chi_{j_2}\\
& = -\left(j_1(j_1+1)+j_2(j_2+1)\right) (\chi_{j_1, j_2}\circ F) \\
& = -\frac{1}{4} \left(k_1(k_1+2)+k_2(k_2+2)\right) (\chi_{j_1, j_2}\circ F),
\end{align*}
which proves the announced result.
\end{proof}

\subsection{Characters of $SO(4)$ as eigenvectors of Laplace-Beltrami operator} 

The purpose of this subsection is to write the characters of $SO(4)$ as trace polynomials. 
As in Section \ref{SO(3)}, we compute a basis for the vector spaces $V_{\leq k}$, $k\in \mathbb{N}$. As an example, we write the matrix $\left[\Delta_{SO(4)}\right]_{V_{\leq k}}$, up to order four, in this basis. This allows us to determine the eigenvalues and eigenvectors of $\left[\Delta_{SO(4)}\right]_{V_{\leq 4}}$.

Recalling that $p_0(U)=4$, $p_1(U)=\tr(U)$, and $p_2(U)=\tr(U^2)$, we give a basis for $V_{\leq k}$.

\begin{prop}
The following set of products of trace polynomials $$\mathcal{B}_{\leq k}:=\{p_0\}\cup \{p_1^lp_2^m\,|\,l,m\in \mathbb{N}\,\,\text{and}\,\,0<l+2m\leq k\}$$ is a basis for the vector space $V_{\leq k}$.
\end{prop}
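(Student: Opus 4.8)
The plan is to establish that $\mathcal{B}_{\leq k}$ is both a spanning set and linearly independent, following the template already used for the $SO(3)$ case but adapting the eigenvalue structure to $SO(4)$. First I would record the key fact that for $U \in SO(4)$, the eigenvalues come in two conjugate pairs $e^{\pm i\alpha}, e^{\pm i\beta}$ with $\alpha,\beta \in [0,2\pi)$, so that $p_1 = \tr(U) = 2\cos\alpha + 2\cos\beta$ and $p_2 = \tr(U^2) = 2\cos 2\alpha + 2\cos 2\beta$. More generally $p_m = 2\cos(m\alpha) + 2\cos(m\beta)$, which via Chebyshev polynomials shows every $p_m$ is a polynomial in $p_1, p_2$. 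This is the analogue of the $SO(3)$ observation $p_m = f_m(p_1)$, except that now two free angular parameters require two generators, explaining why monomials in \emph{both} $p_1$ and $p_2$ appear.

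For the spanning claim, I would argue that $V_{\leq k}$ is generated by the trace polynomials $p_{\boldsymbol{\lambda}}$ for $\boldsymbol{\lambda} \vdash j$, $j \leq k$, and that each such $p_{\boldsymbol{\lambda}} = p_{m_1}\cdots p_{m_s}$ can be rewritten, using $p_m \in \mathrm{Span}\{p_1^lp_2^m\}$, as a polynomial in $p_1$ and $p_2$ whose total weight (counting $p_1$ with weight $1$ and $p_2$ with weight $2$) does not exceed $k$. The weight bookkeeping is the crucial point: since $p_m$ expands into terms $p_1^lp_2^{m'}$ with $l + 2m' \le m$, a product $p_{m_1}\cdots p_{m_s}$ with $\sum m_i = j \le k$ expands into monomials $p_1^lp_2^{m}$ with $l + 2m \le k$. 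Constant terms are absorbed into $p_0$ via $p_0 = 4$. This shows $\mathcal{B}_{\leq k}$ spans $V_{\leq k}$.

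For linear independence, suppose a vanishing combination $c_0 p_0 + \sum c_{l,m} p_1^l p_2^m = 0$ on all of $SO(4)$. Substituting the parametrization, this becomes a trigonometric identity in the two independent variables $\alpha, \beta$ that must hold identically. I would change variables to $x = \cos\alpha$, $y = \cos\beta$ (or to the symmetric coordinates $x+y$ and $xy$, since $p_1 = 2(x+y)$ and $p_2 = 2(2x^2 + 2y^2 - 2) = 4(x^2+y^2) - 4$ are both symmetric in $x,y$), obtaining a polynomial identity in $(x,y)$ ranging over $[-1,1]^2$, hence on an open set. Because $p_1$ and $p_2$ correspond to the algebraically independent elementary symmetric functions $x+y$ and $xy$ of two variables, the monomials $p_1^l p_2^m$ are linearly independent as functions, forcing all coefficients to vanish.

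The main obstacle I anticipate is the linear-independence step: one must verify carefully that $p_1$ and $p_2$ genuinely generate the full ring of symmetric polynomials in $x,y$ and that no nontrivial cancellation occurs on the restricted domain $[-1,1]^2$. The cleanest route is to observe that the map $(x,y) \mapsto (x+y, xy)$ is a dominant (finite, generically two-to-one) map onto its image, so a polynomial in $(s,p) = (x+y,xy)$ vanishing on an open subset of the image must be the zero polynomial; translating back gives that the $p_1^l p_2^m$ are linearly independent and hence, together with $p_0$, form a basis. A dimension count confirms the cardinality of $\mathcal{B}_{\leq k}$ matches $\dim V_{\leq k}$, completing the proof.
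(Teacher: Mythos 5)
Your proposal is correct and follows the same overall strategy as the paper: parametrize the spectrum of $U\in SO(4)$ by two angles, show every $p_m$ is a polynomial in $p_1,p_2$ to get spanning, and reduce linear independence to a polynomial identity in two real variables on a two-dimensional domain. The one genuine difference is the spanning sub-step: the paper computes $p_3$ explicitly and then uses the Cayley--Hamilton identity to derive the recursion $p_{s+1}=p_1p_s-\tfrac12(p_1^2-p_2)p_{s-1}+p_1p_{s-2}-p_{s-3}$, whereas you invoke $p_m=2T_m(\cos\alpha)+2T_m(\cos\beta)$ and the fundamental theorem of symmetric polynomials; both work, and your route has the small advantage of making the weighted-degree bookkeeping ($p_m$ expands into monomials $p_1^lp_2^{m'}$ with $l+2m'\le m$) explicit, a point the paper leaves implicit but which is needed to conclude membership in $V_{\le k}$ rather than just in $\mathcal{V}$. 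Your independence argument via dominance of $(x,y)\mapsto(x+y,xy)$ is the same idea as the paper's hands-on surjectivity claim onto $[-1,1]\times[-1,0]$; just note that $p_2$ is not proportional to $xy$ but equals $4(e_1^2-2e_2)-4$, so you need the (triangular, invertible) polynomial change of variables between $(p_1,p_2)$ and $(e_1,e_2)$ to transfer algebraic independence --- a one-line fix. The closing ``dimension count'' is unnecessary once spanning and independence are both established.
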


\begin{proof}
Recall the definition
$V_{\leq k}:= \text{Span}_{\R}\{p_{_{\boldsymbol{\lambda}}}:SO(4)\to \R\,|\,\boldsymbol{\lambda}\vdash j\,\text{and}\,j\leq k\}.$\\
For a matrix $U\in SO(4)$, its eigenvalues are $\cos\alpha\pm i \sin\alpha$, $\cos\beta\pm i \sin\beta$, with $\alpha, \beta\in [0,2\pi)$.  Consequently,  
\begin{align*}
p_1(U) & =\tr(U)=2(\cos\alpha+\cos\beta),\\
p_2(U) & =\tr(U^2)=2(\cos 2\alpha+\cos 2\beta)=4(\cos^2\alpha+\cos^2\beta-1), \\
p_3(U) & =\tr(U^3)=2(\cos 3\alpha+\cos 3\beta)= 8 \cos^3\alpha+  8 \cos^3\beta-6(\cos\alpha+\cos\beta) \\
 & = 2(\cos\alpha+\cos\beta)((2\cos\alpha)^2-4\cos\alpha\cos\beta+(2\cos\beta)^2)-6(\cos\alpha+\cos\beta) \\
 & = p_1(U)\left(p_2(U)+4-\frac{1}{2}(p_1^2(U)-p_2(U)-4)\right)-3p_1(U) \\
 & = -\frac{1}{2}p_1^3(U)+\frac{3}{2}p_1(U)p_2(U)+3p_1(U).
\end{align*}
Observing that $p_3$ can be written as a function of $p_1$ and $p_2$, we proceed by induction showing that any $p_s$ can also be written as a function of $p_1$ and $p_2$ for any $s\in \mathbb{N}^*$.

From the Cayley-Hamilton equality for a matrix $U\in SO(4)$, we have
$$U^4-p_1U^3+\frac{1}{2}\left(p_1^2-p_2\right)U^2-\frac{1}{6}\left(p_1^3-3p_1p_2+2p_3\right)U+ \mathbb{I}_4=\mathbb{O}_4.$$
Using the formula of $p_3$ as a function of $p_1,p_2$, we get the equivalent equality
 $$U^4-p_1U^3+\frac{1}{2}\left(p_1^2-p_2\right)U^2-p_1U+ \mathbb{I}_4=\mathbb{O}_4.$$
Assuming that $p_1,p_2,\dots,p_s$ can be written as function of $p_1,p_2$, we prove that $p_{s+1}$ can also be written as function of $p_1,p_2$. Indeed, multiplying the Cayley-Hamilton equality by $U^{s-3}$ and taking the trace, we obtain
$$p_{s+1}=p_1p_s-\frac{1}{2}\left(p_1^2-p_2\right)p_{s-1}+p_1p_{s-2}-p_{s-3}.$$
This shows that $\mathcal{B}_{\leq k}$ is a set of generators for the vector space $V_{\leq k}$. 

In order to prove the linear independence of $\mathcal{B}_{\leq k}$, suppose $c_0p_0+\sum c_{lm}p_1^lp_2^m=0$. Making the notation $\cos\alpha=x$ and $\cos \beta=y$, we have equivalently 
$$4c_0+\sum c_{lm}(2(x+y))^l\cdot (4(x^2+y^2-1))^m=0,~\forall x\in [-1,1],\forall y\in [-1,1].$$
Assigning $x=y=0$, we obtain that $c_0=0$.
We can easily prove that for any $a\in [-1,1],~b\in [-1,0]$ there exist $x\in [-1,1]$, $y\in [-1,1]$ such that $2(x+y)=a,~4(x^2+y^2-1)=b$.
It follows that
$$\sum c_{lm }a^l b^m=0,~\forall a\in [-1,1],\forall b\in [-1,0].$$
Fixing $a\in [-1,1]$ arbitrarily, the polynomials in the variable $b$, 
$$\sum_m \left(\sum_l c_{lm}a^l\right)b^m$$
have each an infinite number of roots, therefore their coefficients $\sum\limits_l c_{lm}a^l$, which are polynomials in the variable $a$, have also an infinite number of roots. This leads to $c_{lm}=0$ for all $l,m$.
\end{proof}

In order to prove that the spectrum of $\Delta_{SO(4)}$ is the same as the spectrum of the restricted operator $\Delta_{SO(4)}|_{\mathcal{V}}:\mathcal{V}\to \mathcal{V}$, see Theorem \ref{flag-12} and notation \eqref{space-V}, we need the following technical result.

\begin{lemma}\label{characters-in-V}
The irreducible characters of $SO(4)$ representations are vectors in $\mathcal{V}$.
\end{lemma}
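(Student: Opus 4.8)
The plan is to reduce the statement to a purely algebraic assertion about power sums and then attack that assertion through the $SU(2)\times SU(2)$ description of the characters. By the preceding Proposition, $\mathcal{V}$ is the linear span of $p_0$ together with all monomials $p_1^lp_2^m$, and the Cayley--Hamilton recursion proved there shows that every power sum $p_s=\tr(U^s)$ already lies in $\mathcal{V}$. Hence it suffices to show that each irreducible character $\chi_{j_1,j_2}$ can be written, with real coefficients, as a polynomial in the power sums $p_1,p_2,p_3,\dots$ evaluated on $SO(4)$. First I would record the explicit shape of the character: if $U=F(X,Y)$ with $X,Y\in SU(2)$ having eigenvalues $e^{\pm i\phi}$ and $e^{\pm i\psi}$, then $\chi_{j_1,j_2}(U)=\chi_{j_1}(X)\chi_{j_2}(Y)$, where $\chi_j$ is the $SU(2)$ character, while the eigenvalues of $U$ are $e^{\pm i(\phi+\psi)}$ and $e^{\pm i(\phi-\psi)}$, so that $p_s(U)=\tr(X^s)\tr(Y^s)=(2\cos s\phi)(2\cos s\psi)$.

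Next I would expand the character into power-sum language. Writing each $SU(2)$ factor through the Chebyshev/Weyl formula, $\chi_{j}$ becomes an integer combination of the quantities $a_m:=\tr(X^m)=2\cos m\phi$, and likewise for $b_n:=\tr(Y^n)=2\cos n\psi$; concretely $\chi_j=1+a_2+a_4+\dots$ when $j$ is an integer and $\chi_j=a_1+a_3+\dots$ when $j$ is a half-integer. Multiplying the two factors exhibits $\chi_{j_1,j_2}$ as an integer combination of products $a_mb_n$. The diagonal terms $m=n$ are exactly the trace polynomials $a_mb_m=p_m\in\mathcal{V}$, and the decisive step is to collect the remaining terms and show, by Newton's identities together with the same Cayley--Hamilton recursion, that the whole combination reorganizes into a polynomial in $p_1,p_2$. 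Here the parity constraint $j_1+j_2\in\mathbb{N}$ of \eqref{conditie-j}, which forces only products $a_mb_n$ with $m+n$ even to appear, must be used in full strength.

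The step I expect to be the main obstacle is precisely this last collection: controlling the off-diagonal products $a_mb_n$ with $m\neq n$. On $SO(4)$ the individual factors $a_m=\tr(X^m)$ and $b_n=\tr(Y^n)$ are not separately trace polynomials --- only their diagonal products $a_mb_m=\tr(U^m)$ are --- so one cannot treat $a_m$ and $b_n$ as free variables; one must prove that after summation the off-diagonal contributions assemble into symmetric combinations of the rotation angles of $U$, equivalently into the ring $\mathbb{R}[p_1,p_2]=\mathcal{V}$. Identifying exactly which such combinations survive and verifying that no genuinely non-symmetric remainder is left is the delicate heart of the lemma, and essentially all the work sits there; once the off-diagonal terms are shown to recombine, the conclusion $\chi_{j_1,j_2}\in\mathcal{V}$ follows immediately from the Proposition and from $p_s\in\mathcal{V}$.
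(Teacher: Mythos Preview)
Your proposal has a genuine gap at exactly the point you flag as ``the delicate heart of the lemma.'' The off-diagonal products $a_mb_n$ with $m\neq n$ do \emph{not} recombine into a symmetric expression when $j_1\neq j_2$. Every element of $\mathcal{V}$, being a polynomial in the $p_s(U)=2\cos s\alpha+2\cos s\beta$, is invariant under the involution $\alpha\mapsto -\alpha$ (equivalently under $\cos\phi\leftrightarrow\cos\psi$), whereas this involution sends $\chi_{j_1,j_2}=U_{2j_1}(\cos\phi)\,U_{2j_2}(\cos\psi)$ to $\chi_{j_2,j_1}$. Thus for $j_1\neq j_2$ the single character $\chi_{j_1,j_2}$ simply is not in $\mathcal{V}$; for instance $\chi_{1,0}(U)=1+2\cos 2\phi$ cannot be written as a polynomial in $p_1,p_2$. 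No Newton/Cayley--Hamilton bookkeeping will fix this, because the target statement for the individual character is false.

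What the paper actually proves --- and what suffices for the subsequent theorem, since $\lambda_{j_1,j_2}=\lambda_{j_2,j_1}$ --- is that the \emph{symmetrized} combination $\chi_{j_1,j_2}+\chi_{j_2,j_1}$ lies in $\mathcal{V}$. The mechanism is also different from the one you sketch: instead of the first-kind quantities $a_m,b_n$, the paper works directly with $X=\cos\phi$, $Y=\cos\psi$ and uses that $XY=p_1/4$ and $X^2+Y^2=(p_1^2-p_2+4)/8$ lie in $\mathbb{R}[p_1,p_2]$ (note that $X+Y$ does not). After symmetrization every monomial appears in the form $X^aY^b+X^bY^a$; your parity observation (from \eqref{conditie-j}) now forces $a+b$, hence $a-b$, to be even, which permits the factorization $(XY)^{\min(a,b)}\bigl((X^2)^d+(Y^2)^d\bigr)$ with $d=|a-b|/2\in\mathbb{N}$, and the second factor is a polynomial in the elementary symmetric functions $X^2+Y^2$ and $X^2Y^2$. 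So the parity constraint is indeed used in full strength, but only after the symmetrization step that your outline omits.
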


The idea for the proof is to show that every irreducible character of $SO(4)$  can be written as a linear combination of elements of the form $p_1^l p_2^m$. For the proof, see Appendix.

The set $\text{spec}\Delta_{SO(4)}$, obtained in Theorem \ref{spectru-SO4-789}, can be obtained inductively.
\begin{thm} 
We have the following equality:
$$\emph{spec}\Delta_{SO(4)}=\bigcup\limits_{k\geq 0} \emph{spec} \left[\Delta_{SO(4)}\right]_{V_{\leq k}}.$$
\end{thm}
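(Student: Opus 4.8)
The plan is to prove the stated equality by establishing the two inclusions separately, relying on the invariance property of Theorem \ref{flag-12} for one direction and on the representation-theoretic characterization of $\text{spec}\,\Delta_{SO(4)}$ together with Lemma \ref{characters-in-V} for the other.

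First I would treat the inclusion
$$\bigcup_{k\geq 0}\text{spec}\left[\Delta_{SO(4)}\right]_{V_{\leq k}}\subseteq\text{spec}\,\Delta_{SO(4)}.$$
Fix $k$ and let $\lambda$ be an eigenvalue of the matrix $\left[\Delta_{SO(4)}\right]_{V_{\leq k}}$. Since by Theorem \ref{flag-12} the subspace $V_{\leq k}$ is invariant under $\Delta_{SO(4)}$, this matrix genuinely represents the restriction $\Delta_{SO(4)}|_{V_{\leq k}}\colon V_{\leq k}\to V_{\leq k}$ in the chosen basis $\mathcal{B}_{\leq k}$. Hence an eigenvector of the matrix corresponds to a nonzero trace polynomial $v\in V_{\leq k}$ satisfying $\Delta_{SO(4)}v=\lambda v$ on all of $SO(4)$, so $\lambda$ is an eigenvalue of $\Delta_{SO(4)}$. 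This direction is essentially automatic from the block-triangular construction.

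For the reverse inclusion
$$\text{spec}\,\Delta_{SO(4)}\subseteq\bigcup_{k\geq 0}\text{spec}\left[\Delta_{SO(4)}\right]_{V_{\leq k}},$$
I would invoke the result of \cite{svirkin} already used in the proof of Theorem \ref{spectru-SO4-789}: every eigenvalue $\lambda$ of $\Delta_{SO(4)}$ is realized by some irreducible character $\chi_{j_1,j_2}$, i.e. $\Delta_{SO(4)}\chi_{j_1,j_2}=\lambda\chi_{j_1,j_2}$. By Lemma \ref{characters-in-V} this character lies in $\mathcal{V}=\bigcup_{k\geq 0}V_{\leq k}$, and since the flag is increasing there is a finite $k$ with $\chi_{j_1,j_2}\in V_{\leq k}$. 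Invariance of $V_{\leq k}$ then shows that the coordinate vector of $\chi_{j_1,j_2}$ in $\mathcal{B}_{\leq k}$ is an eigenvector of $\left[\Delta_{SO(4)}\right]_{V_{\leq k}}$ for the eigenvalue $\lambda$, whence $\lambda\in\text{spec}\left[\Delta_{SO(4)}\right]_{V_{\leq k}}$.

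Combining the two inclusions yields the equality. The only nonroutine input is Lemma \ref{characters-in-V}, which guarantees that the characters — which, by \cite{svirkin}, exhaust the eigenfunctions needed to detect the whole spectrum — actually live inside the polynomial flag $\mathcal{V}$; this is where the real work sits, and it is deferred to the Appendix. Everything else is a formal consequence of the invariance established in Theorem \ref{flag-12} and of the correspondence between eigenvalues and irreducible characters.
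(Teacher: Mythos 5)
Your proposal is correct and follows essentially the same route as the paper: the right-to-left inclusion is the immediate consequence of the invariance from Theorem \ref{flag-12}, and the left-to-right inclusion combines the character--eigenvalue correspondence from \cite{svirkin} with Lemma \ref{characters-in-V}. You merely spell out in more detail the direction the paper calls ``obvious.''
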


\begin{proof}
The inclusion from right to left is obvious.
For the other inclusion, we start with $\lambda\in \text{spec}\Delta_{SO(4)}$. By Theorem 3.1. from \cite{svirkin}, there exists an irreducible character $\chi_{_{\lambda}}$ that is an eigenvector, i.e. $\Delta_{SO(4)}\chi_{_{\lambda}}=\lambda \chi_{_{\lambda}}$. From Lemma \ref{characters-in-V}, we have $\chi_{_{\lambda}}\in V_{\leq k}$ for some $k\in \mathbb{N}$. This concludes the proof. 
\end{proof}

In order to compute  the matrix $\left[\Delta_{SO(4)}\right]_{V_{\leq k}}$ in the basis $\mathcal{B}_{\leq k}$, see \eqref{forma-upper-block}, first we need to give a formula for $\Delta_{SO(4)}(p_1^lp_2^m) $ expressed as a linear combination of elements from the basis $\mathcal{B}_{\leq k}$. By particularizing the computations from the proof of Theorem \ref{flag-12}, we obtain
\begin{align*}
\Delta_{SO(4)}(p_1^lp_2^m)   = & m \left(m -1\right) p_{1}^{l} p_{2}^{m -2} \left(p_{1}^2-4\right)^{2}+m p_{1}^{l} p_{2}^{m -1} \left(\left(l -2 m +1\right) p_{1}^{2}-4 l +4\right) \\
& -\frac{l \left(l -1\right) p_{1}^{l -2} p_{2}^{m} \left(p_{2}-4\right)}{2}-\frac{\left(6 m l +2 m^{2}+3 l +4 m \right)  p_{1}^{l}p_{2}^{m}}{2}.
\end{align*}

We also need to choose a total order relation for the elements of $\mathcal{B}_{\leq k}$:
$p_1^{l'}p_2^{m'}\preceq p_1^l p_2^m$ if and only if ($l'+2m'<l+2m$) or ($l'+2m'=l+2m$ and $m'\leq m$).
For $k=4$, we have the following components of the matrix
$\left[ \Delta_{SO(4)}\right]_{V_{\leq 4}}^{\mathcal{B}_{\leq 4}}$,
$$\begin{array}{c|ccccccccc}
&\Delta p_0 & \Delta p_1 & \Delta p_1^2 & \Delta p_2 & \Delta p_1^3 & \Delta p_1p_2 & \Delta p_1^4 & \Delta p_1^2 p_2 & \Delta p_2^2 \\
\hline
p_0 & 0 & 0 & 1 & 1 & 0 & 0 & 0 & 0 & 8 
\\
p_1 & 0 & -\frac{3}{2} & 0 & 0 & 12 & 0 & 0 & 0 & 0 
\\
p_1^2 & 0 & 0 & -3 & -1 & 0 & 0 & 24 & -4 & -16 
\\
p_2 & 0 & 0 & -1 & -3 & 0 & 0 & 0 & 4 & 8 
\\
p_1^3 & 0 & 0 & 0 & 0 & -\frac{9}{2} & 0 & 0 & 0 & 0 
\\
p_1p_2 & 0 & 0 & 0 & 0 & -3 & -\frac{15}{2} & 0 & 0 & 0 
\\
 p_1^4 &0 & 0 & 0 & 0 & 0 & 0 & -6 & 1 & 2 
\\
p_1^2 p_2 & 0 & 0 & 0 & 0 & 0 & 0 & -6 & -12 & -6 
\\
 p_2^2 & 0 & 0 & 0 & 0 & 0 & 0 & 0 & -1 & -8 
\end{array}$$
The diagonal blocks are $\Delta_{00}=[0],\,\,\Delta_{11}=\left[\begin{array}{c}-\frac{3}{2}\end{array}\right]$,
$$\Delta_{22}=\left[\begin{array}{cc} -3 & -1 \\ -1 & -3\end{array}\right],\,\,
\Delta_{33}=\left[\begin{array}{cc} -\frac{9}{2} & 0 \\ -3 & -\frac{15}{2}\end{array}\right],\,\, \Delta_{44}=\left[\begin{array}{ccc} -6 & 1 & 2 \\ -6 & -12 & -6 \\ 0 & -1 & -8\end{array}\right].$$
The eigenvalues of $\Delta_{SO(4)}$ obtained from these blocks are 
$$\text{spec} \left[\Delta_{SO(4)}\right]_{V_{\leq 4}}=\left\{0, -\frac{3}{2}, -2, -4, -\frac{9}{2},-\frac{15}{2},-6, -8, -12\right\},$$
with the corresponding eigenvectors (characters)
\begin{align*}
& {\bf v}_0  =(1,0,0,0,0,0,0,0,0)=p_0\\
& {\bf v}_{-\frac{3}{2}}  =(0,2,0,0,0,0,0,0,0)=2p_1 \\
& {\bf v}_{-2}  =\left(0,0,\frac{1}{2},-\frac{1}{2},0,0,0,0,0\right)=\frac{1}{2}p_1^2-\frac{1}{2}p_2 \\
& {\bf v}_{-4}  =\left(-\frac{1}{2},0,1,1,0,0,0,0,0\right)=-\frac{1}{2}p_0+p_1^2+p_2 \\
& {\bf v}_{-\frac{9}{2}} =\left(0,-2,0,0,\frac{1}{2},-\frac{1}{2},0,0,0\right)=-2p_1+\frac{1}{2}p_1^3-\frac{1}{2}p_1p_2\\
& {\bf v}_{-\frac{15}{2}}  =(0,0,0,0,0,2,0,0,0)=2p_1p_2\\
& {\bf v}_{-6}  =\left(0,0,-\frac{3}{2},-\frac{1}{2},0,0,\frac{1}{4},-\frac{1}{2},\frac{1}{4}\right)=-\frac{3}{2}p_1^2-\frac{1}{2}p_2+\frac{1}{4}p_1^4-\frac{1}{2}p_1^2p_2+\frac{1}{4}p_2^2 \\ 
& {\bf v}_{-8} =\left(\frac{1}{2},0,-2,0,0,0,\frac{1}{4},0,-\frac{1}{4}\right)=\frac{1}{2}p_0-2p_1^2+\frac{1}{4}p_1^4-\frac{1}{4}p_2^2\\
& {\bf v}_{-12} =\left(-\frac{1}{2},0,3,-1,0,0,-\frac{1}{2},2,\frac{1}{2}\right)=-\frac{1}{2}p_0+3p_1^2-p_2-\frac{1}{2}p_1^4+2p_1^2p_2+\frac{1}{2}p_2^2 .
\end{align*}
More explicitly, we can write the irreducible characters of $SO(4)$ as explicit trace polynomials, 
\begin{equation*}
\left.\begin{array}{l}
\lambda_{0, 0}=0 \curly \chi_{0,0}(U)=4 \\
\lambda_{_{\frac{1}{2}, \frac{1}{2}}}=-\frac{3}{2} \curly  \chi_{_{\frac{1}{2},\frac{1}{2}}}(U)=2\tr(U) \\
\lambda_{1, 0}=-2 \curly  \chi_{1,0}(U)=\frac{1}{2}\tr^2(U)-\frac{1}{2}\tr(U^2) \\\lambda_{1, 1}=-4 \curly \chi_{1,1}(U) = -2+\tr^2(U)+\tr(U^2) \\
\lambda_{_{\frac{3}{2}, \frac{1}{2}}}=-\frac{9}{2} \curly  \chi_{_{\frac{3}{2},\frac{1}{2}}}(U)=-2\tr(U)+\frac{1}{2}\tr^3(U)-\frac{1}{2}\tr(U)\tr(U^2) \\
\lambda_{_{\frac{3}{2}, \frac{3}{2}}}=-\frac{15}{2} \curly  \chi_{_{\frac{3}{2},\frac{3}{2}}}(U)=2\tr(U)\tr(U^2)\\
\lambda_{2, 0}=-6 \curly  \chi_{2,0}(U)=-\frac{3}{2}\tr^2(U)-\frac{1}{2}\tr(U^2)+\frac{1}{4}\tr^4(U)-\frac{1}{2}\tr^2(U)\tr(U^2)+\frac{1}{4}\tr^2(U^2)\\
\lambda_{2, 1}=-8 \curly   \chi_{2,1}(U)=2-2\tr^2(U)+\frac{1}{4}\tr^4(U)-\frac{1}{4}\tr^2(U^2)\\
\lambda_{2, 2}=-12 \curly  \chi_{2,2}(U) = -2+3\tr^2(U)-\tr(U^2)-\frac{1}{2}\tr^4(U)+2\tr^2(U)\tr(U^2)+\frac{1}{2}\tr^2(U^2).\end{array}\right.
\end{equation*}

Due to the complexity of the general matrix $\left[\Delta_{SO(4)}\right]_{V_{\leq k}}$ written in the basis $\mathcal{B}_{\leq k}$, we cannot give an explicit formula for a general irreducible character $\chi_{j_1,j_2}$. But, one can compute them case by case as above, increasing the value of $k$.

\appendix

\section{Technical results}

\begin{lemma}\label{trace-comm}
    For $A,B\in \mathcal{M}_{N\times N}(\mathbb{R})$ we have
    $$\tr(K_{NN}\cdot (A\otimes B))=\tr(AB),$$
    where $K_{NN}$ is the $N^2\times N^2$ commutation matrix.
\end{lemma}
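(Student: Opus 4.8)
The plan is to reduce everything to the elementary-matrix expansion of the commutation matrix and then exploit the two basic properties of the Kronecker product that interact well with the trace. First I would recall that, writing $E_{ij}:=e_ie_j^t$ for the standard matrix units of $\mathcal{M}_{N\times N}(\R)$, the commutation matrix admits the representation
$$K_{NN}=\sum_{i=1}^N\sum_{j=1}^N E_{ij}\otimes E_{ji}.$$
This is precisely the matrix characterized by $K_{NN}\vc(X)=\vc(X^t)$; I would either take this expansion as the working definition or verify it in one line by testing both sides against $\vc(E_{ab})$ for arbitrary indices $a,b$.

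Next I would substitute this expansion into the left-hand side and push the factor $A\otimes B$ through each summand by means of the mixed-product rule $(P\otimes Q)(R\otimes S)=(PR)\otimes(QS)$, which yields
$$K_{NN}(A\otimes B)=\sum_{i,j}(E_{ij}A)\otimes(E_{ji}B).$$
Applying the trace together with the multiplicativity $\tr(X\otimes Y)=\tr(X)\tr(Y)$ then gives
$$\tr\big(K_{NN}(A\otimes B)\big)=\sum_{i,j}\tr(E_{ij}A)\,\tr(E_{ji}B).$$

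The final step is a bookkeeping identity. Since $\tr(E_{ij}A)=e_j^tAe_i=A_{ji}$ and likewise $\tr(E_{ji}B)=B_{ij}$, the double sum collapses to
$$\sum_{i,j}A_{ji}B_{ij}=\sum_i(BA)_{ii}=\tr(BA)=\tr(AB),$$
where the last equality is cyclicity of the trace. This is exactly the claimed formula.

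I do not expect any conceptual obstacle here; the computation is entirely mechanical once the expansion of $K_{NN}$ is in hand. The only delicate point is the index convention in the definition of the commutation matrix and the transpositions it induces: with a careless orientation of the matrix units one easily ends up with $\tr(A^tB)$ or a stray transpose in place of $\tr(AB)$. I would therefore double-check that the units appear as $E_{ij}\otimes E_{ji}$ (not $E_{ij}\otimes E_{ij}$) and that $\tr(E_{ij}A)$ extracts the $(j,i)$ entry of $A$ rather than the $(i,j)$ entry, which is what makes the two index patterns recombine into a single matrix product.
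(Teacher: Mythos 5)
Your proof is correct and follows essentially the same route as the paper's: both expand $K_{NN}$ in terms of matrix units and reduce the trace to the index sum $\sum_{i,j}A_{ji}B_{ij}=\tr(AB)$. The only difference is packaging — you invoke the mixed-product rule and $\tr(X\otimes Y)=\tr(X)\tr(Y)$, whereas the paper multiplies the blocks $J_{ji}$ directly — and your index conventions check out.
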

\begin{proof}
    For $i,j\in \{1,\dots,N\}$, the $N\times N$ block of the matrix $K_{NN}$ in the position $(i,j)$ is the single-entry matrix $J_{ji}$, which has 1 on the $(j,i)$ position and zeroes on the rest. For $i\in \{1,\dots,N\}$, the $(i,i)$ block matrix of the product $K_{NN}\cdot (A\otimes B)$ is $\sum\limits_{k=1}^N J_{ki}(a_{ki}B)$, therefore 
    $$\begin{aligned}
    \tr(K_{NN}\cdot (A\otimes B))&=\sum\limits_{i=1}^N\sum\limits_{k=1}^N a_{ki}\tr(J_{ki}B)
    =\sum\limits_{i,k,p,q=1}^N a_{ki}(J_{ki})_{pq}b_{qp}\\
    &=\sum\limits_{i,k,p,q=1}^N a_{ki}\delta_{kp}\delta_{iq}b_{qp}
    =\sum\limits_{i,k=1}^N a_{ki}b_{ik}=\tr(AB).
    \end{aligned}$$
\end{proof}

\begin{lemma}\label{lambda-comm}
    For $U\in \mathcal{M}_{N\times N}(\mathbb{R})$, we have
    $$\Lambda(U)\cdot K_{NN}=U^t\otimes U;~~K_{NN}\cdot \Lambda(U)=U\otimes U^t.$$
\end{lemma}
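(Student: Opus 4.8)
The plan is to verify both identities by computing the $N\times N$ blocks of each matrix product and matching them against the block structure of the Kronecker products $U^t\otimes U$ and $U\otimes U^t$. Recall that the $(i,j)$ block of $\Lambda(U)$ is the rank-one matrix $\mathbf{u}_j\mathbf{u}_i^t$, that the $(i,j)$ block of $K_{NN}$ is the single-entry matrix $J_{ji}$ (exactly as used in the proof of Lemma \ref{trace-comm}), and that the $(i,j)$ block of $A\otimes B$ equals $a_{ij}B$. In particular the $(i,j)$ block of $U^t\otimes U$ is $u_{ji}U$, while that of $U\otimes U^t$ is $u_{ij}U^t$.

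For the first identity, I would compute the $(i,j)$ block of $\Lambda(U)\cdot K_{NN}$ as the sum over $k$ of $(\mathbf{u}_k\mathbf{u}_i^t)\,J_{jk}$. The key simplification is that multiplying the row vector $\mathbf{u}_i^t$ by $J_{jk}$ on the right extracts the scalar $u_{ji}$ and deposits it into the $k$-th slot, i.e. $\mathbf{u}_i^t J_{jk}=u_{ji}\mathbf{e}_k^t$. Summing the resulting terms $u_{ji}\,\mathbf{u}_k\mathbf{e}_k^t$ over $k$ and using the column decomposition $\sum_k \mathbf{u}_k\mathbf{e}_k^t=U$ yields $u_{ji}U$, which is precisely the $(i,j)$ block of $U^t\otimes U$.

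The second identity is handled symmetrically: the $(i,j)$ block of $K_{NN}\cdot\Lambda(U)$ is $\sum_k J_{ki}\,(\mathbf{u}_j\mathbf{u}_k^t)$, where now $J_{ki}$ acts on the column vector $\mathbf{u}_j$ from the left, giving $J_{ki}\mathbf{u}_j=u_{ij}\mathbf{e}_k$. Summing $u_{ij}\,\mathbf{e}_k\mathbf{u}_k^t$ over $k$ and invoking the dual identity $\sum_k \mathbf{e}_k\mathbf{u}_k^t=U^t$ produces $u_{ij}U^t$, the $(i,j)$ block of $U\otimes U^t$.

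The computations are routine once the block bookkeeping is in place; the only point demanding care is tracking which of the two indices of the single-entry matrix $J$ is contracted in each product, together with the resulting direction of the transpose, since the two stated identities differ precisely by the order of the factors. I expect no genuine obstacle beyond maintaining this index discipline, and the argument does not even require $U\in SO(N)$, holding for arbitrary $U\in\mathcal{M}_{N\times N}(\mathbb{R})$ as stated.
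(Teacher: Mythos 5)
Your proof is correct and follows essentially the same route as the paper: both compute the $(i,j)$ block of each product using the fact that the $(i,j)$ block of $K_{NN}$ is the single-entry matrix $J_{ji}$ and that the $(i,j)$ block of $\Lambda(U)$ is $\mathbf{u}_j\mathbf{u}_i^t$. The only cosmetic differences are that you package the entrywise sum via $\mathbf{u}_i^t J_{jk}=u_{ji}\mathbf{e}_k^t$ and $\sum_k\mathbf{u}_k\mathbf{e}_k^t=U$, and you verify the second identity by a symmetric direct computation where the paper simply invokes transposition.
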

\begin{proof}
    For $i,j\in \{1,\dots,N\}$, the $(i,j)$ block of the product $\Lambda(U)\cdot K_{NN}$ is $$\sum\limits_{k=1}^N (\Lambda(U))_{ik}(K_{NN})_{kj}=\sum\limits_{k=1}^N (\bu_k\bu_i^t)J_{jk}.$$
    For $p,q\in \{1,\dots,N\}$, the $(p,q)$ element of the above $N\times N$ block is
    $$\sum_{k,r=1}^N (\bu_k\bu_i^t)_{pr}(J_{jk})_{rq}=\sum_{k,r=1}^N u_{pk}u_{ri}\delta_{jr}\delta_{kq}=u_{ji}u_{pq}=(U^t)_{ij}(U)_{pq},$$
    and therefore the first equality holds. The second equality follows by transposition.
\end{proof}

\begin{lemma}\label{xun} \emph{(\cite{xun}, \cite{magnus}) }
The Euclidean Hessian of $p_m(U)=\tr(U^m),~m\geq 2$, is given by
$$
\hs p_m(U)=m\cdot K_{NN}\left(\sum_{r=0}^{m-2}\left(U^t\right)^r \otimes U^{m-r-2}\right) .
$$
\end{lemma}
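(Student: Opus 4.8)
The plan is to prove the identity entrywise, exploiting the fact that convention (C$_3$) already specifies $\hs p_m(U)$ as an explicit $N^2\times N^2$ block matrix. First I would record the first-order information: since $p_m(U)=\tr(U^m)$ has Euclidean gradient $[\nabla p_m](U)=m(U^t)^{m-1}$ (already used in the paper), we have $\frac{\partial p_m}{\partial u_{qj}}=m\big((U^t)^{m-1}\big)_{qj}=m(U^{m-1})_{jq}$. Differentiating once more, I expand $(U^{m-1})_{jq}=\sum U_{j\,a_1}U_{a_1a_2}\cdots U_{a_{m-2}\,q}$ and apply the Leibniz rule; each of the $m-1$ factors that can equal $u_{pi}$ contributes one term, so that
\[
\frac{\partial^2 p_m}{\partial u_{pi}\,\partial u_{qj}}=m\sum_{r=0}^{m-2}(U^r)_{jp}\,(U^{m-2-r})_{iq}.
\]
By (C$_3$), this is exactly the $(p,q)$ entry of the $(i,j)$ block of $\hs p_m(U)$; equivalently, under the column-major vectorization $\vc$, it is the entry of $\hs p_m(U)$ in global row $(i,p)$ and global column $(j,q)$.

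The heart of the argument is to show that $mK_{NN}\big(\sum_{r=0}^{m-2}(U^t)^r\otimes U^{m-r-2}\big)$ carries precisely these entries. I would read off from the block description in Lemma \ref{trace-comm} that the $(i,i')$ block of $K_{NN}$ is $J_{i'i}$, hence its entry in global row $(i,p)$, global column $(i',p')$ is $\delta_{i'p}\,\delta_{ip'}$; concretely, left multiplication by $K_{NN}$ swaps the block index with the within-block row index. For a single summand, the Kronecker block structure gives the $\big((i',p'),(j,q)\big)$ entry of $(U^t)^r\otimes U^{m-r-2}$ as $((U^t)^r)_{i'j}\,(U^{m-r-2})_{p'q}$. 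Composing the two, the $\big((i,p),(j,q)\big)$ entry of $K_{NN}\big((U^t)^r\otimes U^{m-r-2}\big)$ equals $((U^t)^r)_{pj}\,(U^{m-r-2})_{iq}=(U^r)_{jp}\,(U^{m-2-r})_{iq}$. Summing over $r$ and multiplying by $m$ reproduces the Hessian entry displayed above, which finishes the proof.

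The main obstacle is purely the index bookkeeping: one must keep straight that the column-major convention $\vc$ makes the outer index of each $N^2\times N^2$ matrix the column label and the inner index the row label, so that the transposes in $(U^t)^r$ and the swap performed by $K_{NN}$ land exactly where needed. Once the roles of the four indices $i,p,j,q$ are fixed, the calculus (a single Leibniz expansion) and the final matching are routine.

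An equivalent, more invariant route I would keep in reserve is to compute the second directional derivative $D^2p_m(U)[H,K]=m\sum_{r=0}^{m-2}\tr\big(U^rHU^{m-2-r}K\big)$ and convert it into $\vc(H)^t\,\hs p_m(U)\,\vc(K)$ using the standard identities $\vc(AXB)=(B^t\otimes A)\vc(X)$ and $K_{NN}\vc(X)=\vc(X^t)$. This avoids explicit indices but trades them for invoking those vec identities; I would present the entrywise computation as the primary proof precisely because it dovetails with the definitions of $\hs$ and $K_{NN}$ fixed earlier in the paper.
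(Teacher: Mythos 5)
Your computation is correct. Note, however, that the paper does not actually prove this lemma: it is stated with citations to \cite{xun} and \cite{magnus} and used as an imported fact, so there is no internal proof to compare against. Your entrywise verification is a legitimate self-contained substitute, and it is well aligned with the paper's own conventions: the identification of $\frac{\partial^2 p_m}{\partial u_{pi}\partial u_{qj}}$ with the $(p,q)$ entry of the $(i,j)$ block is exactly convention (C$_3$), and your description of $K_{NN}$ (the $(i,i')$ block being the single-entry matrix $J_{i'i}$) is the same one the authors use in the proof of Lemma \ref{trace-comm}, so the index bookkeeping dovetails with the rest of the appendix. The Leibniz expansion of $(U^{m-1})_{jq}$ giving $m\sum_{r=0}^{m-2}(U^r)_{jp}(U^{m-2-r})_{iq}$, and the matching against the $\bigl((i,p),(j,q)\bigr)$ entry of $K_{NN}\bigl((U^t)^r\otimes U^{m-r-2}\bigr)$, both check out; the resulting expression is also visibly symmetric under $(i,p)\leftrightarrow(j,q)$ after reindexing $r\mapsto m-2-r$, as a Hessian must be. Your reserve argument via $D^2p_m(U)[H,K]=m\sum_r\tr(U^rHU^{m-2-r}K)$ and the identities $\vc(AXB)=(B^t\otimes A)\vc(X)$, $K_{NN}\vc(X)=\vc(X^t)$ is essentially the route taken in the cited sources; what your primary entrywise proof buys is that it never leaves the explicit block conventions fixed in the introduction, at the cost of heavier index tracking.
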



\begin{lemma}\label{produs-grad}
    For $f,g:SO(N)\to\mathbb{R}$, we have
    $$2\left<\nabla_{SO(N)}f(U),\nabla_{SO(N)}g(U)\right>=\tr((\nabla f)^t(U)\cdot \nabla g(U))-\tr(\nabla f(U)\cdot U^t\cdot\nabla g(U)\cdot U^t).$$
\end{lemma}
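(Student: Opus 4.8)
The plan is to reduce the intrinsic inner product of Riemannian gradients to a purely Euclidean (Frobenius) trace computation by first expressing $\nabla_{SO(N)}$ as the orthogonal projection of the ambient gradient onto the tangent space of $SO(N)$. Recall that for a submanifold carrying the induced metric, the intrinsic gradient of the restriction of a function equals the tangential component of the gradient of any smooth ambient prolongation; this is precisely the content of conventions (C$_2$)--(C$_3$) together with the results of \cite{first-order}. Throughout I keep the convention that $\nabla f(U)$ denotes the Euclidean gradient, as in the statement.

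First I would identify the tangent and normal spaces at $U\in SO(N)$ with respect to ${\bf g}_{_{\text{Frob}}}$. Differentiating curves $t\mapsto U\exp(tA)$ with $A$ skew-symmetric shows $T_U SO(N)=\{UA\,|\,A+A^t=\mathbb{O}_N\}$, while a matrix $W$ is ${\bf g}_{_{\text{Frob}}}$-orthogonal to all such $UA$ exactly when $W^tU$ is symmetric, i.e. $W=US$ with $S$ symmetric. Decomposing $U^t\nabla f(U)$ into its symmetric and skew-symmetric parts and retaining only the skew part yields the projection formula
\begin{equation*}
\nabla_{SO(N)}f(U)=\frac{1}{2}\left(\nabla f(U)-U(\nabla f(U))^tU\right),
\end{equation*}
and likewise for $g$. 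This step is the conceptual heart of the argument; once it is in place the rest is mechanical.

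Next I would substitute these two expressions into $\left<\nabla_{SO(N)}f(U),\nabla_{SO(N)}g(U)\right>=\tr\left((\nabla_{SO(N)}f)^t\nabla_{SO(N)}g\right)$ and expand the resulting product of four terms. Writing $F:=\nabla f(U)$ and $G:=\nabla g(U)$ for brevity, the expansion produces $\tr(F^tG)$, two cross terms, and a fourth term that collapses to $\tr(F^tG)$ after using $U^tU=\mathbb{I}_N$ and cyclicity of the trace. The two cross terms, $\tr(F^tUG^tU)$ and $\tr(FU^tGU^t)$, are shown to be equal by transposing one of them and applying cyclicity, which is the only nontrivial bookkeeping step. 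Collecting everything gives $4\left<\nabla_{SO(N)}f,\nabla_{SO(N)}g\right>=2\tr(F^tG)-2\tr(FU^tGU^t)$, which is the claimed identity after dividing by $2$.

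I expect the main obstacle to be the careful verification of the projection formula, and in particular the sign and orientation conventions (distinguishing the $UA$ from the $AU$ description of the tangent space), since an error there would propagate into the final trace identity. The subsequent trace manipulations are routine, relying only on cyclicity, transposition invariance of the trace, and $U^tU=\mathbb{I}_N$.
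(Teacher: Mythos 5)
Your proposal is correct and follows essentially the same route as the paper: the paper likewise starts from the projection formula $\nabla_{SO(N)}h(U)=\tfrac{1}{2}\left(\nabla h(U)-U(\nabla h)^t(U)U\right)$ (quoted from an earlier reference rather than rederived) and then expands $\tr\!\left((\nabla_{SO(N)}f)^t\nabla_{SO(N)}g\right)$ into four terms, using $U^tU=\mathbb{I}_N$, cyclicity, and transposition exactly as you describe. The only difference is that you supply the tangent/normal space derivation of the projection formula, which the paper omits.
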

\begin{proof}
   For a smooth function $h:SO(N)\to \R$, in \cite{first-order} it has been proved the formula:
    \begin{equation}\label{gradient-ON}
    \nabla_{SO(N)}h(U)=\dfrac{1}{2}\left(\nabla h(U)-U\cdot(\nabla h)^t(U)\cdot U\right).
    \end{equation}
      Therefore,
    $$\begin{aligned}
     & 2 \left<\nabla_{SO(N)}f(U),  \nabla_{SO(N)}g(U)\right>= \\
     &=\dfrac{1}{2}\tr((\nabla f(U)-U\cdot(\nabla f)^t(U) \cdot U)^t(\nabla g(U)-U\cdot (\nabla g)^t(U)\cdot  U))\\
     &=\dfrac{1}{2}\tr((\nabla f)^t(U)-U^t\cdot (\nabla f)(U)\cdot U^t)(\nabla g(U)-U\cdot (\nabla g)^t(U)\cdot  U))\\
     &=\dfrac{1}{2}\left(\tr((\nabla f)^t(U)\cdot \nabla g(U))-\tr((\nabla f)^t(U)\cdot U\cdot (\nabla g)^t(U)\cdot U)\right.\\
     & ~~~\left.-\tr(U^t\cdot (\nabla f)(U)\cdot U^t\cdot (\nabla g)(U))+\tr(U^t\cdot (\nabla f)(U)\cdot U^t\cdot U\cdot (\nabla g)^t(U)\cdot U)\right)\\
     &=\tr((\nabla f)^t(U)\cdot \nabla g(U))-\tr(\nabla f(U)\cdot U^t\cdot\nabla g(U)\cdot U^t).
    \end{aligned}$$
\end{proof}

\begin{lemma}\label{interm}
\begin{enumerate}[(i)]
\item $\nabla_{SO(N)} p_1^q(U)=\dfrac{q}{2}p_1^{q-1}(U)(\mathbb{I}_N-U^2)$ for $q\geq 0$.
\item $\nabla_{SO(N)} p_m(U)=\dfrac{m}{2}\left((U^t)^{m-1}-U^{m+1}\right)$ for $m\geq 1$.
\end{enumerate}
\end{lemma}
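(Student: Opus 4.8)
The plan is to reduce both identities to the formula \eqref{gradient-ON} from \cite{first-order}, which expresses the Riemannian gradient on $SO(N)$ in terms of the ambient Euclidean gradient,
$$\nabla_{SO(N)}h(U)=\tfrac{1}{2}\left(\nabla h(U)-U\,(\nabla h)^t(U)\,U\right).$$
Once this is in hand, each part becomes a matter of computing the Euclidean gradient of the prolongation of the relevant trace polynomial and substituting; no intrinsic geometry on $SO(N)$ is needed beyond \eqref{gradient-ON}.

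For part $(ii)$ I would start from the Euclidean gradient already recorded in the proof of Proposition \ref{laplacian-tr-u-la-k} (following \cite{cookbook}), namely $[\nabla p_m](U)=m(U^t)^{m-1}$. Transposing this symmetric-free expression gives $(\nabla p_m)^t(U)=mU^{m-1}$, whence $U\,(\nabla p_m)^t(U)\,U=m\,U\cdot U^{m-1}\cdot U=mU^{m+1}$. Plugging both pieces into \eqref{gradient-ON} produces $\tfrac{m}{2}\left((U^t)^{m-1}-U^{m+1}\right)$, exactly the claimed formula.

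For part $(i)$ the key observation is that $p_1(U)=\tr(U)$ is linear on $\mathcal{M}_{N\times N}(\R)$, with constant Euclidean gradient $\nabla p_1(U)=\mathbb{I}_N$. The ordinary chain rule then yields $\nabla(p_1^q)(U)=q\,p_1^{q-1}(U)\,\mathbb{I}_N$, a symmetric matrix, so that $(\nabla(p_1^q))^t(U)=q\,p_1^{q-1}(U)\,\mathbb{I}_N$ as well. Substituting into \eqref{gradient-ON} gives
$$\nabla_{SO(N)}p_1^q(U)=\tfrac{1}{2}\Big(q\,p_1^{q-1}(U)\,\mathbb{I}_N-U\big(q\,p_1^{q-1}(U)\,\mathbb{I}_N\big)U\Big)=\tfrac{q}{2}\,p_1^{q-1}(U)\left(\mathbb{I}_N-U^2\right),$$
using $U\,\mathbb{I}_N\,U=U^2$.

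There is no genuine obstacle here; the computation is short and the only points demanding care are the handling of transposes in the ambient gradient and the boundary cases. For $m=1$ one uses $(U^t)^0=\mathbb{I}_N$ and checks $\nabla_{SO(N)}p_1(U)=\tfrac{1}{2}(\mathbb{I}_N-U^2)$, consistent with $(i)$ at $q=1$; for $q=0$ both sides vanish since $p_1^0$ is constant. These degenerate cases I would verify directly at the end.
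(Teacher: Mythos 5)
Your proof is correct and follows exactly the route the paper takes: both parts are obtained by substituting the Euclidean gradients $\nabla p_1^q=qp_1^{q-1}\mathbb{I}_N$ and $\nabla p_m=m(U^t)^{m-1}$ into the projection formula \eqref{gradient-ON}. The paper only states this reduction without writing out the substitution, so your version simply supplies the (straightforward) details.
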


\begin{proof}
{(i) The proof follows from \eqref{gradient-ON} and the equality $\nabla p_1^q=qp_1^{q-1}\mathbb{I}_N$.\\
(ii) The proof follows from \eqref{gradient-ON} and the equality $\nabla p_m=m(U^t)^{m-1}$ (see \cite{cookbook}).}
\end{proof}

\begin{lemma}\label{scalar-gradienti-76} For $m,m'\in \mathbb{N},~m \geq m'$ we have
$$2\left\langle\nabla_{SO(N)} p_m, \nabla_{SO(N)} p_{m'}\right\rangle=mm'\left(p_{m-m'}-p_{m+m'}\right).$$
\end{lemma}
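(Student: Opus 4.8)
The plan is to reduce everything to traces of powers of $U$ by applying Lemma \ref{produs-grad} directly with $f=p_m$ and $g=p_{m'}$, and then to exploit the orthogonality relation $U^t=U^{-1}$ to collapse the resulting expressions into power sums. The intrinsic gradients never need to be computed explicitly; the whole point of Lemma \ref{produs-grad} is that it rewrites the Frobenius inner product of the $SO(N)$-gradients in terms of the ambient Euclidean gradients, which are immediate.

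First I would substitute the Euclidean gradients. From \cite{cookbook} (also recorded in the proof of Lemma \ref{interm}) we have $\nabla p_m(U)=m(U^t)^{m-1}$ and $\nabla p_{m'}(U)=m'(U^t)^{m'-1}$, whence $(\nabla p_m)^t(U)=mU^{m-1}$. Inserting these into the right-hand side of Lemma \ref{produs-grad} gives
$$2\left\langle\nabla_{SO(N)} p_m,\nabla_{SO(N)} p_{m'}\right\rangle=mm'\tr\left(U^{m-1}(U^t)^{m'-1}\right)-mm'\tr\left((U^t)^{m-1}U^t(U^t)^{m'-1}U^t\right).$$

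Then I would use $U^t=U^{-1}$, valid since $U\in SO(N)$. For the first trace, $U^{m-1}(U^t)^{m'-1}=U^{m-1}U^{-(m'-1)}=U^{m-m'}$, so it equals $p_{m-m'}$; here the hypothesis $m\geq m'$ is exactly what guarantees $m-m'\geq 0$, so that $p_{m-m'}$ is one of the legitimate power sums $p_j$, $j\geq 0$ (reducing to $p_0=N$ when $m=m'$). For the second trace, the four exponents add to $(m-1)+1+(m'-1)+1=m+m'$, leaving $\tr\left((U^t)^{m+m'}\right)$.

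The only step that uses orthogonality a second time is the identity $\tr\left((U^t)^k\right)=p_k$ for every $k\geq 0$: since $U^k\in SO(N)$, we have $(U^t)^k=(U^k)^{-1}=(U^k)^t$, and the trace is invariant under transposition, so $\tr\left((U^t)^k\right)=\tr(U^k)=p_k$. Applying this with $k=m+m'$ turns the second trace into $p_{m+m'}$, and collecting the two contributions produces $mm'\left(p_{m-m'}-p_{m+m'}\right)$, as claimed. I do not anticipate any real obstacle here; the single subtlety worth flagging is that transposed (equivalently, negative) powers of an orthogonal matrix carry the same trace as the corresponding positive power, and it is precisely this fact that forces both traces to collapse into honest power-sum trace polynomials.
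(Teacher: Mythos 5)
Your proposal is correct and follows essentially the same route as the paper: apply Lemma \ref{produs-grad} with $f=p_m$, $g=p_{m'}$, substitute $\nabla p_m(U)=m(U^t)^{m-1}$, and use $U^tU=\mathbb{I}_N$ together with $\tr\left((U^t)^k\right)=\tr(U^k)$ to collapse both traces to $p_{m-m'}$ and $p_{m+m'}$. The only difference is cosmetic: you spell out the transpose-invariance step that the paper leaves implicit.
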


\begin{proof}
{
From Lemma \ref{produs-grad} and $\nabla p_m=m(U^t)^{m-1},~ \nabla p_{m'}=m'(U^t)^{m'-1}$ we obtain
\begin{align*}
2\left\langle\nabla_{SO(N)} p_m, \nabla_{SO(N)} p_{m'}\right\rangle &=\tr((m(U^t)^{m-1})^t\cdot m'(U^t)^{m'-1})-\\
&-\tr(m(U^t)^{m-1}\cdot U^t\cdot m'(U^t)^{m'-1} \cdot U^t)\\
&=mm'(\tr(U^{m-1}\cdot (U^t)^{m'-1})-\tr((U^t)^{m+m'}))\\
&=mm'(\tr(U^{m-m'})-\tr(U^{m+m'}))\\
&=mm'\left(p_{m-m'}-p_{m+m'}\right).
\end{align*}}
\end{proof}

\begin{lemma}\label{scalar-product-gradient-234}
For  $p_{_{\boldsymbol{\lambda}'}}(U):=\tr(U^{m_1})\dots \tr(U^{m_r})$, where $m_1\geq \dots\geq m_r\geq 2$ and for $s>r$, we have 
$$\left<\nabla_{SO(N)}p_{\boldsymbol{\lambda}'},\nabla_{SO(N)}p_{1}^{s-r}\right>=\sum_{i=1}^r\dfrac{m_i(s-r)}{2}p_1^{s-r-1}p_{m_1}\dots\widehat{p_{m_i}}\dots p_{m_r}(p_{m_i-1}-p_{m_i+1}).$$
\end{lemma}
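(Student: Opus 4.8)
The plan is to combine the Leibniz (product) rule for the intrinsic gradient with the bilinearity of the Frobenius inner product, thereby reducing the whole computation to a single-factor scalar product that is then evaluated with Lemma \ref{produs-grad}. Since $\nabla_{SO(N)}$ is a first-order operator it obeys the product rule, so I would first expand
$$\nabla_{SO(N)}p_{\boldsymbol{\lambda}'}=\sum_{i=1}^r p_{m_1}\dots\widehat{p_{m_i}}\dots p_{m_r}\,\nabla_{SO(N)}p_{m_i}.$$
Pairing this against $\nabla_{SO(N)}p_1^{s-r}$ and using bilinearity of $\left<\cdot,\cdot\right>$ then gives
$$\left<\nabla_{SO(N)}p_{\boldsymbol{\lambda}'},\nabla_{SO(N)}p_1^{s-r}\right>=\sum_{i=1}^r p_{m_1}\dots\widehat{p_{m_i}}\dots p_{m_r}\left<\nabla_{SO(N)}p_{m_i},\nabla_{SO(N)}p_1^{s-r}\right>,$$
so it suffices to establish the scalar identity $2\left<\nabla_{SO(N)}p_{m_i},\nabla_{SO(N)}p_1^{s-r}\right>=m_i(s-r)p_1^{s-r-1}(p_{m_i-1}-p_{m_i+1})$ for each $i$.

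For that single-factor term I would apply Lemma \ref{produs-grad} with $f=p_{m_i}$ and $g=p_1^{s-r}$, inserting the matrix-form Euclidean gradients $\nabla p_{m_i}=m_i(U^t)^{m_i-1}$ (from \cite{cookbook}) and $\nabla p_1^{s-r}=(s-r)p_1^{s-r-1}\mathbb{I}_N$ (the same prolongation gradient used in the proof of Lemma \ref{interm}). The two trace terms in Lemma \ref{produs-grad} then become
$$\tr\big((\nabla p_{m_i})^t\nabla p_1^{s-r}\big)=m_i(s-r)p_1^{s-r-1}\tr(U^{m_i-1})=m_i(s-r)p_1^{s-r-1}p_{m_i-1}$$
and, using $(U^t)^{m_i-1}U^t=(U^t)^{m_i}$ together with $\tr((U^t)^{m_i+1})=\tr(U^{m_i+1})$,
$$\tr\big(\nabla p_{m_i}\,U^t\,\nabla p_1^{s-r}\,U^t\big)=m_i(s-r)p_1^{s-r-1}\tr((U^t)^{m_i+1})=m_i(s-r)p_1^{s-r-1}p_{m_i+1}.$$
Subtracting yields exactly the claimed single-factor identity, and substituting back into the sum produces the statement.

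This argument is essentially the one used to prove Lemma \ref{scalar-gradienti-76}, with the second factor $p_{m'}$ replaced by the power $p_1^{s-r}$; the only genuine difference is that the scalar $p_1^{s-r-1}$ is carried through the Euclidean gradient of $g$. I do not expect a real obstacle here: everything is bilinear, and the orthogonality relation $U^tU=\mathbb{I}_N$ has already been absorbed into Lemma \ref{produs-grad}, so no extra use of it is needed. The only point demanding care is the bookkeeping of the transposes and the identity $\tr(U^k)=\tr((U^t)^k)$ when simplifying the second trace, which is where an index slip could most easily occur.
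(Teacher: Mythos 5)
Your proof is correct, and it reaches the result by a slightly different route through the appendix. The paper first computes the tangential gradients explicitly via Lemma \ref{interm} — writing $\nabla_{SO(N)}p_{m_i}=\frac{m_i}{2}\left((U^t)^{m_i-1}-U^{m_i+1}\right)$ and $\nabla_{SO(N)}p_1^{q}=\frac{q}{2}p_1^{q-1}(\mathbb{I}_N-U^2)$ — and then expands the Frobenius pairing of these two matrix expressions in one block computation. You instead keep everything at the level of Euclidean (prolongation) gradients and feed each single-factor pair into Lemma \ref{produs-grad}, exactly as the paper does for Lemma \ref{scalar-gradienti-76}. The two computations are equivalent at bottom, since both Lemma \ref{interm} and Lemma \ref{produs-grad} are consequences of the same projection formula \eqref{gradient-ON}; your version has the small advantage of uniformity (one mechanism handles both Lemma \ref{scalar-gradienti-76} and Lemma \ref{scalar-product-gradient-234}, and the trace manipulations are shorter because the second gradient is a scalar multiple of $\mathbb{I}_N$), while the paper's version makes the intrinsic gradients visible, which it reuses elsewhere. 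Your trace evaluations and the final reassembly of the sum are correct as written.
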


\begin{proof}
{
From Lemma \ref{interm} $(i)$ and $(ii)$, we obtain successively:
$$
\left<\nabla_{SO(N)}p_{\boldsymbol{\lambda}'},\nabla_{SO(N)}p_{1}^{s-r}\right>=$$
\begin{align*}
&=\left<\sum_{i=1}^r p_{m_1}\dots  \left(\nabla_{SO(N)}p_{m_i}\right)\dots p_{m_r},\dfrac{s-r}{2}p_1^{s-r-1}(\mathbb{I}_N-U^2)\right>\\
&=\left<\sum_{i=1}^r p_{m_1}\dots  \left(\frac{m_i}{2}((U^t)^{m_i-1}-U^{m_i+1})\right)\dots p_{m_r},\dfrac{s-r}{2}p_1^{s-r-1}(\mathbb{I}_N-U^2)\right>\\
&=\dfrac{s-r}{4}p_1^{s-r-1}\tr\left(\sum_{i=1}^r m_i\cdot p_{m_1}\dots  \left((U^t)^{m_i-1}-U^{m_i+1}\right)\dots p_{m_r}\cdot (\mathbb{I}_N-(U^t)^2)\right)\\
&= \dfrac{s-r}{4}p_1^{s-r-1}\sum_{i=1}^r m_i\cdot p_{m_1}\dots\widehat{p_{m_i}}\dots p_{m_r}\tr\left((U^t)^{m_i-1}-(U^t)^{m_i+1}-U^{m_i+1}+U^{m_i-1} \right)\\
&=\dfrac{s-r}{2}p_1^{s-r-1}\sum_{i=1}^r m_i\cdot p_{m_1}\dots\widehat{p_{m_i}}\dots p_{m_r}\left(p_{m_i-1}-p_{m_i+1} \right).
\end{align*}
}
\end{proof}

\noindent {\bf Proof of Lemma \ref{characters-in-V}}\\
From \cite{miller}, p. 351--362, we have the following formula for the characters of $SO(4)$
\begin{align*}
    \chi_{(i,j)}&=\frac{\sin (i+j+1)\frac{\alpha+\beta}{2}\cdot \sin (i-j+1)\frac{\alpha-\beta}{2}}{\sin \frac{\alpha+\beta}{2}\cdot \sin \frac{\alpha-\beta}{2}}\\
    &=U_{i+j}\left(\cos \frac{\alpha+\beta}{2}\right)\cdot U_{i-j}\left(\cos \frac{\alpha-\beta}{2}\right),
\end{align*}
where $i,j$ are integers such that $i\geq |j|$ and $U_m$ are the Chebyshev polynomials of second kind.
These polynomials have the explicit expression
$$U_m(x)=\sum_{s=0}^{[m/2]} (-1)^s\cdot C_{m-s}^s\cdot (2x)^{m-2s}.$$
If we make the following notations:
$$k:=i+j,~l:=i-j,~X:=\cos \frac{\alpha+\beta}{2},~Y:=\cos \frac{\alpha-\beta}{2},$$
then we have
\begin{equation*}
   \chi_{(i,j)}+\chi_{(i,-j)}=\sum_{q=0}^{[k/2]}\sum_{r=0}^{[l/2]}A(k,q)\cdot A(l,r)\cdot \left(X^{k-2q}Y^{l-2r}+X^{l-2r}Y^{k-2q}\right),
\end{equation*}
where $A(m,n):=(-1)^n\cdot C_{m-n}^n\cdot 2^{m-2n}$ for natural numbers $m,n$ such that $n\leq [m/2] $. We notice that $(k-2q)+(l-2r)=2(i-q-r)$ is an even integer.\\
Any expression of the form $X^aY^b+X^bY^a$, with $a,b$ natural numbers having {\bf the same parity}, can be written as
$$X^aY^b+X^bY^a=(XY)^{\min (a,b)}\left(X^{|a-b|}+Y^{|a-b|}\right)=(XY)^{\min (a,b)}\left((X^2)^{d}+(Y^2)^{d}\right),$$
where $d:=\frac{|a-b|}{2}$ is integer.\\\\
Since
$XY=\frac{\cos \alpha+\cos \beta}{2}=\frac{p_1}{4}$
and 
$X^2Y^2=\frac{p_1^2}{16},~~X^2+Y^2=1+\cos \alpha\cos \beta=\frac{p_1^2-p_2+4}{8},$
and taking into account that the expression $(X^2)^{d}+(Y^2)^{d}$ can be written as a symmetric polynomial in terms of the elementary symmetric functions $X^2+Y^2$ and $X^2Y^2$, we obtain that  $\chi_{(i,j)}+\chi_{(i,-j)}$ as an expression of trace polynomials $p_1$ and $p_2$. 

The irreducible characters of $SO(4)$ are $\chi_{(i,j)}+\chi_{(i,-j)}$, see \cite{biedenharn}. Making the link with our previous notation from Section \ref{SO(4)-123}, it follows that the irreducible character $\chi_{j_1,j_2}=\chi_{(i,j)}+\chi_{(i,-j)}$, where $j_1=\frac{i+j}{2}$ and $j_2=\frac{i-j}{2}$ verifying condition \eqref{conditie-j}, can be expressed as a polynomial in the variables $p_1$ and $p_2$. 

\rightline{$\Box$}


\begin{thebibliography}{99}
\bibitem{axler} S. Axler, P. Bourdon, W. Ramey, Harmonic function theory (second edition), Springer (2001)
\bibitem{svirkin} V.N. Berestovski\u{ı}, V.M. Svirkin, The Laplace Operator on Normal Homogeneous Riemannian Manifolds, Siberian Advances in Mathematics, 20, 231–255 (2010)
\bibitem{svirkin-zubareva} V.N. Berestovski\u{ı}, I.A. Zubareva, V.M. Svirkin, The Spectra of the Laplace Operators on Connected Compact Simple Lie Groups of Rank 3, Siberian Advances in Mathematics, 26, 153–181 (2016)
\bibitem{bergery} L.B. Bergery, J.P. Bourguignon, Laplacians and Riemannian submersions with totally geodesic fibers, Illinois Journal of Mathematics, 26, 181--200 (1982) 
\bibitem{bernstein} D.S. Bernstein, Matrix mathematics: theory, facts, and formulas (second edition), Princeton, NJ: Princeton University Press (2009)
\bibitem{biedenharn} L.C. Biedenharn, Wigner Coefficients for the $R_4$ Group and Some Applications, Journal of Mathematical Physics, 2, 433--441 (1961)
\bibitem{first-order} P. Birtea, I. Ca\c su, D. Com\u anescu, First order optimality conditions and steepest descent algorithm on orthogonal Stiefel manifolds, Optimization Letters, 13, 1773--1791 (2019)
\bibitem{second-order} P. Birtea, I. Ca\c su, D. Com\u anescu, Second order optimality on orthogonal Stiefel manifolds, Bulletin des Sciences Math\' ematiques, 161, 102868 (2020)
\bibitem{laplacian} P. Birtea, I. Ca\c su, D. Com\u anescu, Laplace-Beltrami operator on the orthogonal group in ambient (Euclidean) coordinates, Applied and Computational Harmonic Analysis, 69, 101619 (2024)
\bibitem{cardona} D. Cardona, B. Grajales, M. Ruzhansky, On the sharpness of Strichartz estimates and spectrum of compact Lie groups, available online -- \url{https://arxiv.org/abs/2302.04139v3}, last accesed: \today
\bibitem{fujii} Kazuyuki Fujii, Hiroshi Oike, Tatsuo Suzuki, More on the isomorphism $SU(2)\otimes SU(2) \cong SO(4)$, International Journal of Geometric Methods in Modern Physics, 4, 471--485 (2007)
\bibitem{kosmann} Y. Kosmann-Schwarzbach, Groups and Symmetries, From Finite Groups to Lie Groups, Springer (2010) 
\bibitem{magnus} J.R. Magnus, H. Neudecker, Matrix Differential Calculus with Applications in Statistics and Econometrics (third edition), John Wiley \& Sons Ltd (2019)
\bibitem{makhlin} Y. Makhlin, Nonlocal Properties of Two-Qubit Gates and Mixed States, and the Optimization of Quantum Computations. Quantum Information Processing 1, 243–252 (2002)
\bibitem{miller} W. Miller, Jr., Symmetry Groups and their Applications, Academic Press (1972)
\bibitem{otieno} V.P. Onyango-Otieno, Differential operators associated with Gegenbauer polynomials - 1: The limit-circle case, IC-87/346, International Atomic Energy Agency (1987)
\bibitem{cookbook} K.B. Petersen, M.S. Pedersen, The Matrix Cookbook, available online -- \url{https://www2.imm.dtu.dk/pubdb/edoc/imm3274.pdf}, last accessed: \today
\bibitem{svirkin-2010} V.M. Svirkin, The Laplace operator spectrum on connected compact simple rank one and two Lie groups, Uchenye Zapiski Kazanskogo Universiteta. Seriya Fiziko-Matematicheskie Nauki, Vol. 152, Book 1, 219--234 (2010)
\bibitem{vilenkin} N.Ja. Vilenkin, Special functions and the theory of group representations, Translations of Mathematical Monographs AMS, Vol. 22 (1968)
\bibitem{wallach} N. Wallach, Minimal immersions of symmetric spaces into spheres, In: Symmetric Spaces, Dekker, New York, 1–40 (1972)
\bibitem{watson} E.T. Whittaker, G.N. Watson, A course of modern analysis (fifth edition), Cambridge University Press (2021)
\bibitem{xun} Xun Zheng, Hessian of trace of some matrix functions, available online -- \url{https://xunzheng.github.io/files/hessian_single.pdf}, last accessed: \today
\bibitem{zubareva-2016} I.A. Zubareva, The Spectrum of the Laplace Operator on Connected Compacty Simple Lie Groups of Rank Four, Siberian Advances in Mathematics, 27, 196–226 (2017)
\bibitem{zubareva-2020} I.A. Zubareva,  The Spectrum of the Laplace Operator on Connected Compacty Simple Lie Groups of Rank Four. II, Siberian Advances in Mathematics, 30, 213–227 (2020)

\end{thebibliography}
\end{document}